\newenvironment{emphasisquote}
{\begin{quote}}
{  \emph{(emphasis added)}\end{quote}}
\title[Misuse of mathematics in the Hong-Page framework]{Fatal errors and misuse of mathematics in the Hong-Page Theorem and Landemore's epistemic argument}
\author{Álvaro Romaniega}
\email{alvaroromaniega@gmail.com}
\thanks{I would like to thank Jason Brennan, Lu Hong, Hélène Landemore and Scott E. Page for their valuable comments after the manuscript was completed. I would also like to thank \href{https://mvazcar.com/}{Miguel Vázquez-Carrero} for his thorough  revision of the manuscript.}
\newtheorem{theorem}{Theorem}[section]
\newtheorem{corollary}[theorem]{Corollary}
\newtheorem{proposition}[theorem]{Proposition}
\theoremstyle{definition}
\newtheorem{definition}[theorem]{Definition}
\newtheorem{assumption}{Assumption}
\newtheorem{error}{Error}
\newtheorem{remarkx}[theorem]{Remark}
\newenvironment{remark}
{\pushQED{\qed}\begin{remarkx}}
	{\popQED\end{remarkx}}
\begin{document}

	\begin{abstract}
In the pursuit of understanding collective intelligence, the Hong-Page Theorems have been presented as cornerstones of the interplay between diversity and ability. However, upon rigorous examination, there seem to be inherent problems and misinterpretations within these theorems. Hélène Landemore's application of these results in her epistemic argument and her political proposal showcases a invalid use of mathematical principles. This paper critically dissects the Hong-Page Theorems, revealing significant inconsistencies and oversights, and underscores the indispensable role of ``ability" in group problem-solving contexts. This paper aims not to undermine the importance of diversity, but rather to highlight the dangers of misusing mathematical principles and the necessity for a more careful analysis of mathematical results when applying them to social sciences.

	\end{abstract}

	\maketitle
	\tableofcontents\label{sec:toc}

In the burgeoning field of collective intelligence theory, one theorem has emerged as particularly influential: the Hong-Page Theorem. This theorem is claimed to show that in certain conditions, a diverse group of problem solvers has the potential to outperform a homogeneous group of high-ability problem solvers. The theorem has been instrumental in shaping discourses around the importance of diversity in decision-making and problem-solving contexts. One notable application of the Hong-Page Theorem is found in the work of Hélène Landemore, who uses it as the cornerstone of her political proposal for an ``Open Democracy." Landemore's thesis argues that greater cognitive diversity in a collective decision-making process not only enhances its epistemic properties, achieving the epistemic superiority of democracy, but also serves as a normative benchmark.

However, the theorem has not gone unchallenged. Mathematician Abigail Thompson has been among the most vocal critics, casting doubt on the use of the Hong-Page Theorem in this way and raising concerns about the validity of the conclusions drawn from it. In this paper, we aim to shed light on the Hong-Page Theorem, presenting our view that its findings, while intriguing, may not hold the significant social implications often attributed to them and its central insight may be less informative than widely believed.  By doing this, we aim to reveal the misconceptions propagated through their widespread acceptance, especially in their application to real-world socio-political phenomena, such as in Hélène Landemore's ``Open Democracy" proposition and her epistemic argument for democracy.

Thompson's article ought to have highlighted the problems with the Hong-Page theorem. However, this has not been the case given its widespread acceptance, several rejoinders (particularly focusing on the simulation part) and papers building on Hong-Page's work. For instance, after the publication of \cite{Tho14}, Landemore continues to base her argument on the theorem, \cite{BL21}, for unreplied rejoinders to Thompson's critique see \cite{Kue17} or \cite{Sin19} and for an example of a paper building on Hong-Page's work see \cite{GSea19}. Here, we aim to respond to these criticisms and the usage of the theorem by delving deeper into the theoretical aspects and using this to analyze particular misapplications. Our new foundational critiques and examples of misapplications seriously compromise the Hong-Page theorem and these works around it.
Therefore, in this article, we introduce new arguments that underscore the theorem's simple insight and its misapplications. While Thompson's critique focused on illustrating the theorem's simplicity through specific examples, offering amendments to the theorem's statement, and assessing the role of simulations, this article examines several critical dimensions more deeply:

\begin{itemize}
	\item Demonstrating the theorem's simplicity by introducing and proving straightforward and non-probabilistic versions of it. These versions encapsulate all arguments related to diversity or ability, devoid of hypotheses that are not tied to diversity or ability, e.g., introducing clones and a random selection of them.
	\item 
	Highlighting the theorem's simplicity by offering a concise proof, based on standard mathematical results, thus calling attention to how the deployment of complex mathematics can sometimes obscure the straightforward truths.
	\item Unveiling further, arguably unrealistic, consequences of the hypotheses of the theorem that cast doubt on their plausibility. For example, the random group always arriving at the correct solution unanimously.
	\item Explicitly defining all the hypotheses and auxiliary assumptions which helps to illuminate how they seem to naturally lead to a predetermined conclusion, even if this was not the original intention of the theorem itself. This approach will be partly illustrated through the use of "counterexamples" to the theorem without these assumptions, i.e., diversity does not trump ability when some technically tailored hypotheses are removed.
	\item Enhancing the discussion further, we adopt a framework akin to the original, yet grounded in more empirically valid and theoretically sound assumptions (for example, excluding the use of clones). This refined approach maintains a formal resemblance to the original theorem, to the extent that it could be viewed as a sophisticated enhancement. However, it arrives at fundamentally different conclusions, demonstrating that when realistic hypotheses are used, ability significantly outweighs diversity.
	\item Investigating other theorems by Hong and Page, particularly the ``Diversity Prediction Theorem," to highlight similar issues. This section aims to clarify the limitations of applying such propositions in real-world contexts. For the sake of clarity we have moved it to an appendix.
	\item Identifying the misapplications of the theorem by Hong and Page in various papers, books, and conferences. See also below for a discussion of Page's misstatement of the theorem, in the context of Landemore's work.
	\item Similarly as in the previous bullet, examining the (mis)use of the theorem in the political proposals of authors like Hélène Landemore. More specifically, we analyze:
	\begin{itemize}
		\item The misstatements of the thesis of the theorem, in the sense that  it does not support what is claimed.
		\item The misstatement of the hypotheses of the theorem, in the sense that the necessary hypotheses are significantly more than the ones stated. This is also present in Page's statement of the theorem. Our analysis also serves as a rebuttal to Page's response to Thompson's critique.
		\item Assessing the realism of the theorem's hypotheses within political applications through a Moorean style argument.
		\item Assessing the realism of the theorem's hypotheses within political applications  highlighting potential contradictions.
		\item Evaluating the substantive content of Landemore's derived results, such as the ``Numbers Trump Ability Theorem."
		\item Emphasizing the importance of ability in the previous theorem. Properly applied, it would support a version of epistocracy.
	\end{itemize}
\end{itemize}

The paper is organized as follows. We begin with a thorough dissection of the Hong-Page ``Diversity Trumps Ability" Theorem in Section \ref{sec:2}. We delve into the definitions that underpin these theorems and carefully examine the assumptions relating to the problems and problem solvers. We then derive and discuss a series of trivial corollaries from these assumptions, concluding with a concise analysis of other related results and a simplified proof of the theorem.

In Section \ref{sec:3}, the paper takes a critical turn, presenting a series of counterexamples when some arbitrary hypothesis are removed that challenge the robustness of the Hong-Page Theorem. We start by challenging the necessity of the injective function and the ``unique best agent'' assumption, eventually moving towards a discussion on the performance and selection of clones. We conclude that what the theorem requires is not ``diversity", but the existence of a more ``able" problem solver who can improve upon areas where others fall short.

The critique deepens in Section \ref{sec:ability trumps div} with the proposition of a new Hong-Page style theorem: ``Ability trumps diversity''. This section revises the original assumptions to allow a fair comparison between ability and diversity, ultimately culminating in the presentation of this theorem.

Moving into Section \ref{sec:6}, we delve into the misuse  of mathematics in the original Hong-Page theorems. Here, ``misuse" refers to the objective fact that mathematics is employed as a step in an argument in an incorrect way (that is, logically invalid or unsound argument), without delving into the reasons behind this misapplication. We explore how the application of ``complex" mathematics might have been simplified making its relatively simple insights more accessible. Additionally, we investigate instances of misapplications of the theorem. We further discuss the issue of using the prestige of mathematics to lend credence to their interpretation.

In the penultimate section, Section \ref{sec:7}, we turn our critique towards Hélène Landemore's political proposal. We delve into the conceptual stretch of the theorem when applied in political philosophy. Specifically, we suggest that the mathematical requisites of both the original theorem and her version are not met, which compromises the conclusions drawn by Landemore. In this context, we examine the application of mathematics and identify misinterpretations of the mathematical theorems. Further, we scrutinize the application of hypotheses in the ``Diversity Trumps Ability Theorem'' and discuss foundational issues of her ``Numbers Trump Ability Theorem''.

Finally, Section \ref{sec:8} wraps up the paper, offering a robust summary of the findings and their implications on the ongoing discourse surrounding collective intelligence and the role of diversity and ability within it. 

The initial sections of this text employs minimal mathematical language to facilitate understanding of the theorem, its proof, and corollaries. Wherever mathematical language is used, we have endeavored to provide a non-technical explanation, typically prefaced by the phrase, ``In other words". However, Sections \ref{sec:ability trumps div}, \ref{sec:6}, \ref{sec:7}, and \ref{sec:8} primarily reference earlier sections (or the appendices) and the results derived therein, and are essentially devoid of mathematics. To improve accessibility, the more technical parts are moved to the appendices. We have decided to keep those technical parts for those interested in mathematical or other details, while enabling a lighter reading that excludes these appendices. The mathematical parts presuppose a degree of mathematical proficiency (although the mathematics used are not overly complex). In particular, Appendix \ref{sec:5} shifts focus onto the ``Diversity Prediction Theorem'' and the ``Crowds Beat Averages Law'', discussing their results, pointing out the asymmetric role of ``ability'' and ``diversity'' and highlighting a basic mathematical error made by Page in advocating for diversity. This is a related theorem often cited with the ``Diversity Trumps Ability Theorem", but we have moved it to an appendix to avoid introducing unnecessary mathematical details into the text, although some misconceptions around these theorems will be discussed in the main text.

We hope that our critique stimulates introspection within the collective decision theory community and encourages a more discerning and judicious use of mathematical theorems. Our discussion aims to encourage a mindful approach to mathematical application in the collective decision theory realm, underscoring the necessity for clarity and detailed analysis. This ensures mathematics serves to illuminate rather than obscure.

A final caveat, this critique should not be taken as a dismissal of the importance of diversity (which we consider to be one important epistemic factor among others) in decision-making, but rather as a call to address the misuse of mathematics in these contexts. It urges us to consider the rigorous and nuanced approach required when applying mathematical theories to sociopolitical constructs. As such, this paper makes a contribution to the ongoing discourse on collective intelligence, fostering a deeper understanding of the mathematical theorems used.

\section{The ``Diversity Trumps Ability" Theorem}\label{sec:2}
This section introduces the key definitions and assumptions utilized in our analysis of the Hong-Page problem-solving framework.
\subsection{Definitions}
Here, we define what constitutes a problem solver and the nature of problems they tackle.
\begin{definition}[Problem solver] 
	Let $X$ be a set of possible states of the world. For simplicity, $X$ is assumed to be a finite set. A \textit{problem solver} is a function $\phi:X\to X$. We will consider several problem solvers throughout this analysis. The set of problem solvers is denoted by $\Phi$.
\end{definition} 
\begin{definition}[Problem]
	For each problem solver $\phi$, define the value function as $V_\phi: X \to [0,1]$ as a function mapping states of the world to real numbers between 0 and 1. The \textit{problem} is finding the maximum of $V_\phi$. Given a probability measure on $X$ with full support, $\nu$, the expected value of the performance of each agent is given by
	\begin{equation}\label{eq:expected value}
		\mathbb{E}_\nu(V_\phi\circ\phi)=\sum_{x\in X}V_\phi\left(\phi(x)\right)\nu(x)\,.
	\end{equation}
\end{definition}
Let's discuss the intuition behind the problem-solving process. Given an initial state\footnote{This is the equivalent of the  guess value or starting value in an optimization algorithm.} $x$ from the set $X$, a problem solver aims to find a solution to the problem by mapping $x$ to $\phi(x)$. In other words, the problem solver transforms the input $x$ into a potential solution, hoping that this transformation will maximize the value of the function $V_\phi$.

\subsection{Problem assumptions}
Here, we state the assumptions implicit in the Hong-Page framework. In particular, we are going to impose some conditions on the set $\Phi$ and on the problem following Hong-Page's assumptions.

\begin{assumption}[Unique problem]\label{as:unique prob} All problem solvers share the same value function, i.e.,
$\forall \phi,\phi'\in \Phi,~~ V_\phi=V_{\phi'}=:V.$ 
\end{assumption}
 In other words, any two problem solvers evaluate the problem space in the same way. This assumption simplifies the analysis by ensuring that all problem solvers have a consistent evaluation criterion for the problem. See Remark \ref{rem:Rawls} for why this assumption is not shared in standard political philosophy.

	\begin{assumption}[Unique solution]
		There exists a unique\footnote{Hereafter, the subscript in the existential quantifier denotes the cardinality of the subset of elements satisfying a given property. For instance, $\exists_{=1}$, sometimes denoted as $\exists!$, means that there is only one element satisfying the required condition.} state of $X$ that maximizes the value function $V$, i.e., 
		$\exists_{=1} x^*~/~ V(x^*)=1\,.$
	\end{assumption}
In other words, there is exactly one optimal solution to the problem that maximizes the value of the function $V$. This assumption allows us to focus on finding the unique solution.
	\begin{assumption}[Strictly increasing problem]\label{as:injective}
	$V$ is injective, i.e., if $V(x)=V(x')$, then $x=x'.$		
    That is, we can order $X$ as $\{x_1,\ldots,x_{|X|}\}$ such that
    $$
    V(x_1)<\ldots<V(x_{|X|})\,.
    $$
    \end{assumption}
In other words, this assumption implies that the problem has a well-defined ordering of potential solutions. The original article did not state explicitly that the value function V is one-to-one. However, this assumption is necessary for the theorem to hold, as Thompson pointed out, \cite{Tho14}, see Section \ref{sec:V inj} for more details.
	\subsection{Problem solver assumptions}
	\begin{assumption}[Everywhere ability in problem solvers]\label{as:ability} For all problems solvers $\phi\in\Phi$ it holds that
		$V(\phi(x))\ge V(x)$ for every $x\in X$. In particular, $\phi(x^*)=x^*\,.$	
	\end{assumption}
	This assumption states that all problem solvers are able to improve (with non-strict inequality) the value of any state. Furthermore, by Assumption \ref{as:injective}, if $\phi(x)\neq x$, $V(\phi(x))>V(x)\,.$ In other words, if a problem solver is applied to a state, the value of the state will never decrease. Furthermore, if the agents transforms the state, the value will strictly increase.
	\begin{assumption}[No improvement, idempotence]\label{as:idempot}
	$\forall \phi\in\Phi,~~ \phi\circ\phi=\phi\,.$		
	\end{assumption}
	This assumption states that problem solvers are idempotent. In other words, applying a problem solver to a state twice will have the same effect as applying it once, without affecting the value of the candidate/potential solution.

	\begin{assumption}["Difficulty", imperfect problem solvers]\label{as:difficult} For every problem solver $\phi$ there exists a state $x$ such the optimal solution, $x^*$, is not returned, i.e.,
		$\forall \phi\in\Phi~\exists x\in X ~/~ \phi(x)\neq x^*\,.$		
	\end{assumption}
In other words, by hypothesis, for every agent there are instances where they fail to find the optimal solution.
	\begin{assumption}[``Diversity'', sufficient unstuck problem solvers]\label{as:diversity}
	For every state $x\neq x^*$ there exists a problem solver $\phi$ such that it returns a different state of the world, i.e.,
	$\forall x\in X\backslash\{x^*\}~\exists~ \phi\in\Phi ~/~ \phi(x)\neq x\,.$		
	\end{assumption}
In other words, this assumption ensures\footnote{As Hong and Page put it.} a ``diversity'' of problem solvers in $\Phi$, with at least one problem solver capable of making progress from any non-optimal state. 
	\begin{assumption}[Unique best problem solver]\label{as:unique best agent}
		$|\arg\max_{\phi\in\Phi}\{\mathbb{E}_\nu(V\circ\phi)\}|=1$, i.e., there is only one best-performing agent according to \eqref{eq:expected value}.
	\end{assumption}
In other words, this assumption states that there is only one problem solver that performs best on average. There is only one problem solver that is the most likely to lead to the optimal state.
	\subsection{Problem solver interaction assumptions}
	Here, we turn to the interaction between problem solvers.
	\begin{assumption}[In series deliberation]\label{as:group del}
	The problem solver or agents $\Phi'\coloneqq\{\phi_1,\ldots, \phi_N\}$, when working together to solve the problem starting at $x$ are equivalent to the following sequence:
	\begin{enumerate}
	\item First, a problem solver $i_1$ such that  $x_1\coloneqq\phi_{i_{1}^x}(x)\neq x_0\coloneqq x$.
	\item Second, a problem solver $i_2$ such that  $x_2\coloneqq\phi_{i_{2}^x}(x_2)\neq x_1$.
	\item Inductively, a problem solver $i_j$ such that  $x_j\coloneqq\phi_{i_{j}^x}(x_{j-1})\neq x_{j-1}.$
\end{enumerate}
This stops at $x_n$, returning that value, such that it is a fixed point for all elements of  $\{\phi_1,\ldots, \phi_N\}$ (all the agents are stuck at the same point, unanimity).
\end{assumption}
Several remarks are in order. There can be multiple sequences arriving at the same point. The fixed point exists as $x^*$ is a fixed point for all elements of $\Phi$ by assumption. The group performance is tantamount to composition of the functions in a proper way:
	$$
	\phi^{\Phi'}(x)\coloneqq\phi_{i_n^x}\circ\ldots \phi_{i_1^x}(x)\,.
	$$
In other words, this assumption states that a group of problem solvers can be thought of as a sequence of agents that takes turns applying the problem solvers in the group to the current state, such that we approach the optimal value. The group will stop when it reaches a state that is a fixed point for all of the problem solvers in the group, i.e., unanimity.
\begin{assumption}[Clones]\label{as:clones}
	There exists an infinite amount of identical copies of each agent $\phi\in\Phi$.
\end{assumption}
This assumption implies that an infinite number of each problem solver is available. This condition is necessary for the theorem to hold, as will be discussed in Section \ref{sec:clone performance} and \ref{sec:clone selection}. Essentially, when the probability apparatus is introduced by Hong and Page, the selection criterion implies that an agent can be chosen multiple times, with repetition, indicating that multiple copies of each agent should be available.
\subsection{An example}
To give an example, let \(X=\{a,b,c,d\}\) and \(\Phi:=\{\phi_1,\phi_2,\phi_3\}\)  such that
\begin{center}
	\begin{tabular}{c|c|c|c|c}
		$x$& $V (x)$ & $\phi_1 (x)$ & $\phi_2 (x)$ & $\phi_3 (x)$ \\
		\hline
		$a$ & $1/4$ & $b$ & $a$ & $b$ 	\\
		$b$ & $1/2$ & $b$ & $c$ & $b$ 	\\
		$c$ & $3/4$ & $d$ & $c$ & $c$ 	\\
		$d$ & 1 	 & $d$ & $d$ & $d$ 			\\
	\end{tabular}.
\end{center}
For instance, this means that \(V(a)=\frac{1}{4}\) or \(\phi_2(b)=c\). Then, the hypotheses of the theorem are satisfied. Indeed, the problem is:
\begin{enumerate}
	\item \textbf{Problem}. The task is to maximize the value \(V(x)\). Looking at the table, the objective is to find \(d\), which has a value of 1.
	\item \textbf{Problem solver}. We have three problem solvers, \(\phi_1\), \(\phi_2\), and \(\phi_3\). These solvers transform an initial state \(x\) (one of the values \(a, b, c, d\)) into another state, which could be the same or different.
\end{enumerate}
The hypotheses:
\begin{enumerate}
	\item \textbf{Unique problem evaluation}. All problem solvers use the same value function \(V(x)\) to assess the desirability of a solution.
	\item \textbf{Unique solution}. There is a singular best solution, which is \(d\), having a value \(V(d)\) equal to 1.
	\item \textbf{Strictly increasing problem}. The values in \(V(x)\) are distinct and can be ranked in increasing order: \(V(a)=\frac{1}{4} < V(b) =\frac{1}{2} < V(c)=\frac{3}{4} < V(d)=1\).
	\item \textbf{Everywhere ability in problem solvers}. When applied to any state, a problem solver either maintains the value or improves it. For instance, \(\phi_1\) strictly increases the value of the states \(\{a, c\}\) and maintains the value of \(\{b, d\}\).
	\item \textbf{No improvement, idempotence}. If you apply a problem solver to a state repeatedly, the outcome remains the same. For example, applying \(\phi_1\) to \(a\) gives \(b\) and, if applied again, gives \(b\) too.
	\item \textbf{"Difficulty", imperfect problem solvers}. Each solver has some limitations. For instance, neither \(\phi_1\) nor \(\phi_3\) can transform \(b\) to \(d\).
	\item \textbf{"Diversity", sufficient unstuck problem solvers}. For every non-optimal state, at least one problem solver can make progress. For example, for state \(b\), \(\phi_2\) can improve it to \(c\).
	\item \textbf{Unique best problem solver}. Let's compute the expected value for each problem solver using the uniform distribution over \(\{a, b, c, d\}\). As we see, \(\phi_1\) is the best problem solver. 
		\item For $\phi_1$:
		\begin{itemize}
			\item $\mathbb{E}(V \circ \phi_1) = \frac{1}{4} \cdot (V(b) + V(b) + V(d) + V(d)) = \frac{1}{4} \left(\frac{1}{2} + \frac{1}{2} + 1 + 1\right) = \frac{3}{4}$.
			\item $\mathbb{E}(V \circ \phi_2) = \frac{1}{4} \cdot (V(a) + V(c) + V(c) + V(d)) = \frac{1}{4} \left(\frac{1}{4} + \frac{3}{4} + \frac{3}{4} + 1\right) = \frac{11}{16}$.
			\item $\mathbb{E}(V \circ \phi_3) = \frac{1}{4} \cdot (V(b) + V(b) + V(c) + V(d)) = \frac{1}{4} \left(\frac{1}{2} + \frac{1}{2} + \frac{3}{4} + 1\right) = \frac{11}{16}$.
		\end{itemize}
	\item \textbf{In series deliberation}. If solvers work together in sequence, they iteratively improve a state. Starting with \(a\), if we first apply \(\phi_1\) and then \(\phi_2\), we reach the state \(c\). If we apply \(\phi_1\) again, we reach \(d\).
	\item \textbf{Clones}. Imagine having infinite copies of each \(\phi\). This would mean we can always use any solver as needed, even if multiple instances of the same solver are required. As we said, when the probability apparatus is introduced by Hong and Page, the selection criterion implies that an agent can be chosen multiple times, with repetition, indicating that multiple copies of each agent should be available.
\end{enumerate}

\subsection{Trivial corollaries from the assumptions}
By construction (i.e., from the assumptions and \textit{not} the theorem), we have the following corollaries. Note that no profound or even standard mathematical results are needed; we just need the assumptions mentioned above combined with trivial arithmetic and trivial properties of sets.
\begin{corollary}\label{cor:diverse group}
	All members of $\Phi$ working together can solve the Problem $\forall~x\in X$. 
\end{corollary}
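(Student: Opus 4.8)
The plan is to unfold the meaning of ``working together'' through Assumption~\ref{as:group del} and then show that the resulting deliberation sequence is forced to terminate at $x^*$. First I would fix an arbitrary starting state $x\in X$ and form the associated sequence $x_0\coloneqq x, x_1, x_2,\ldots$ guaranteed by the in-series deliberation assumption, where each $x_j=\phi_{i_{j}^x}(x_{j-1})$ is obtained by applying some solver that genuinely moves the state, i.e.\ $x_j\neq x_{j-1}$.

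The first key step is to observe that the values along this sequence are \emph{strictly} increasing. Indeed, by Assumption~\ref{as:ability} each application satisfies $V(\phi(x_{j-1}))\ge V(x_{j-1})$; since the deliberation step requires $x_j=\phi_{i_{j}^x}(x_{j-1})\neq x_{j-1}$, the injectivity of $V$ (Assumption~\ref{as:injective}) upgrades this to the strict inequality $V(x_{j-1})<V(x_j)$. Hence $V(x_0)<V(x_1)<V(x_2)<\cdots$, so no state is repeated along the sequence.

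The second step is termination. Since $X$ is finite and the values $V(x_j)$ are strictly increasing (so the $x_j$ are pairwise distinct), the sequence must stop after finitely many steps, say at $x_n$; this is also consistent with Assumption~\ref{as:group del}, which posits that the process halts at some $x_n$. By that same assumption, the stopping point $x_n$ is a fixed point of every solver in the group, i.e.\ $\phi(x_n)=x_n$ for all $\phi\in\Phi$ (unanimity).

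The final---and really the only substantive---step is to identify $x_n$ with $x^*$. Suppose for contradiction that $x_n\neq x^*$. Then Assumption~\ref{as:diversity} provides a solver $\phi\in\Phi$ with $\phi(x_n)\neq x_n$, contradicting the unanimity just established. Therefore $x_n=x^*$, and the full group $\Phi$ reaches the optimal state from the arbitrary starting point $x$. As $x$ was arbitrary, the group solves the Problem for every $x\in X$. I do not expect any genuine obstacle here---as the section title suggests, the content lies entirely in correctly chaining the monotonicity (to guarantee termination) with the diversity condition (to guarantee that the only possible common fixed point is $x^*$); the one point to handle with care is ensuring the value sequence is \emph{strictly} monotone, which is precisely where injectivity of $V$ is indispensable.
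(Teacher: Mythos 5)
Your proof is correct and follows essentially the same route as the paper's: strict monotonicity of $V$ along the deliberation sequence (via Assumptions \ref{as:injective}, \ref{as:ability}, \ref{as:group del}), termination by finiteness of $X$, and identification of the terminal state with $x^*$ via Assumption \ref{as:diversity}. Your version merely makes the final identification explicit as a contradiction argument, which the paper leaves implicit.
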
 
\begin{proof}
This follows straightforwardly from Assumptions \ref{as:injective}, \ref{as:ability}, \ref{as:diversity}, and \ref{as:group del} such that,
	$$
	V(x)<V\left(\phi_{i_{1}^x}(x)\right)<\ldots <V\left(\phi_{i_{n}^x}(x_{n-1})\right)=1\,,
	$$
	for some $n\le|X|$. Indeed, by Assumption \ref{as:diversity} there is an agent which transforms the previous state, starting at $x$ and assuming it is not $x^*$, to a different one. By Assumption \ref{as:injective} and \ref{as:ability}, the value is strictly superior to the one of the previous state. This can be done till they reach $x^*$.
\end{proof}
\begin{corollary}\label{cor:best agent}
	There exists a state $~x\in X$ such that $\forall N_1\in\mathbb{N}$, $N_1$ ``clones'' of the the best performing agent cannot solve the Problem.
\end{corollary}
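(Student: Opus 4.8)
The plan is to isolate the single best-performing agent and then exploit idempotence to show that cloning it adds nothing. First I would let $\phi^*$ denote the unique maximizer of \eqref{eq:expected value} furnished by Assumption \ref{as:unique best agent}, and apply Assumption \ref{as:difficult} directly to $\phi^*$: this produces a state $x$ with $\phi^*(x)\neq x^*$. The crucial point is that this $x$ is chosen once and for all, before any $N_1$ is specified, so the witness state will be uniform in $N_1$ as the statement demands.

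Next I would unwind what a group of $N_1$ clones actually computes. By Assumption \ref{as:clones} the clones exist, and by Assumption \ref{as:group del} the group's output on the initial state $x$ is a composition $\phi_{i_n^x}\circ\cdots\circ\phi_{i_1^x}(x)$ in which every factor is the \emph{same} function $\phi^*$, since all members are identical copies. Invoking Assumption \ref{as:idempot}, namely $\phi^*\circ\phi^*=\phi^*$, a trivial induction collapses any such composition to $\phi^*$ itself. Hence the group necessarily returns $\phi^*(x)$, which by construction is not $x^*$, so the clones cannot solve the Problem starting from $x$, and this conclusion is independent of $N_1$.

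The only subtlety — and the place I would be careful rather than call it an obstacle — is the termination clause in Assumption \ref{as:group del}, which requires the process to halt at a common fixed point. Here one simply checks both cases: if $\phi^*(x)=x$ then $x$ is already a fixed point of every clone and is returned immediately; if $\phi^*(x)\neq x$ then the first step moves to $\phi^*(x)$, which idempotence makes a fixed point of all clones, so it is returned at the next stage. In either case the returned state equals $\phi^*(x)\neq x^*$, and the argument rests on nothing deeper than Assumptions \ref{as:idempot}, \ref{as:difficult}, and \ref{as:unique best agent}, consistent with the remark that these corollaries need only the hypotheses together with trivial set-theoretic reasoning.
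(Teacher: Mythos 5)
Your argument is correct and follows essentially the same route as the paper's own proof, which likewise combines Assumptions \ref{as:idempot} and \ref{as:group del} to collapse $N_1$ clones to a single copy of $\phi^*$ and then invokes Assumption \ref{as:difficult} to produce the failing state. Your explicit case analysis of the termination clause is a careful elaboration of a step the paper treats as immediate, but it is not a different approach.
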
 
\begin{proof}
	By Assumptions \ref{as:idempot} and  \ref{as:group del}, $N_1$ ``clones'' of the best performing agent work in the same way as a single clone alone. By Assumption \ref{as:difficult}, no agent can solve the problem alone for some $x$, so the corollary follows straightforwardly.
\end{proof}
So, just ``rearranging" the assumptions we can trivially (again, no profound mathematics, just basic arithmetic and trivial properties of sets) prove the following version of the ``theorem'' containing the main conclusion. The advantage of this formulation is that no clones are needed, compare with Theorem \ref{th:HP PNAS}.
\begin{theorem}[Basic Hong-Page theorem] \label{th:basic HP} Given the Assumptions \ref{as:unique prob}-\ref{as:group del}, all problem solvers (or a selected subset of them) working together perform better than the best problem solver (in the sense that there is a state $x$ such that the best problem solver cannot solve but the whole group can). Note that the best problem solver can be included in the first group.
\end{theorem}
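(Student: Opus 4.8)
The plan is to read this ``theorem'' as nothing more than the conjunction of Corollaries \ref{cor:diverse group} and \ref{cor:best agent}, so that the whole argument is a one-line assembly of two facts already in hand. First I would fix, via Corollary \ref{cor:best agent}, a state $\hat{x}\in X$ at which the best-performing agent --- and hence, by idempotence (Assumption \ref{as:idempot}) together with the in-series rule (Assumption \ref{as:group del}), any number $N_1$ of its clones --- halts at a fixed point strictly below $x^*$, so that $V(\phi_{\mathrm{best}}(\hat{x}))<1$ and the best agent together with all its clones \emph{fails} to solve the Problem from $\hat{x}$.

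Next I would apply Corollary \ref{cor:diverse group} at this same $\hat{x}$: the full set $\Phi$ working in series reaches $x^*$ from every starting state, in particular from $\hat{x}$, so the group \emph{succeeds}. Contrasting the two at $\hat{x}$ --- the group solves, the best agent does not --- is precisely the asserted ``perform better'', and no arithmetic beyond comparing values of $V$ is involved.

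It then remains to dispatch the two parenthetical refinements. For ``a selected subset'', I would not invoke all of $\Phi$ but instead construct, starting from $\hat{x}$, a single admissible unsticking trajectory $\hat{x}=y_0,y_1,\ldots,y_m=x^*$, choosing at each non-optimal $y_j$ one agent that moves it (such an agent exists by Assumption \ref{as:diversity}); the finite collection of agents so selected is a subset that already solves the Problem from $\hat{x}$, which suffices. For ``the best problem solver can be included in the first group'', I would simply observe that appending $\phi_{\mathrm{best}}$ to the group cannot obstruct progress: by everywhere ability (Assumption \ref{as:ability}) its moves never lower $V$, and the diversity guarantee at each non-optimal state is left intact, so Corollary \ref{cor:diverse group} still applies to the enlarged group.

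The honest assessment is that there is no genuine obstacle here --- the mathematical content was entirely spent in proving the two corollaries, and the only thing demanding any care is verifying that the ``subset'' and ``include the best agent'' clauses do not silently break the diversity hypothesis (Assumption \ref{as:diversity}) on which Corollary \ref{cor:diverse group} rests. That verification is exactly the paragraph above, and it is precisely the point the surrounding text is pressing: once the assumptions are laid bare, this ``theorem'' is a tautological rearrangement of them requiring no nontrivial mathematics.
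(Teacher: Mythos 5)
Your proposal is correct and follows essentially the same route as the paper: the paper's own proof is exactly the one-line assembly of Corollary \ref{cor:diverse group} (the group always solves) with Corollary \ref{cor:best agent} (the best agent fails at some state), plus the observation via Assumption \ref{as:ability} that adding agents --- including the best one --- never worsens the outcome. Your extra care over the ``selected subset'' and ``include the best agent'' clauses only makes explicit what the paper leaves implicit in its appeal to those corollaries.
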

\begin{proof}
This follows straightforwardly from the assumptions outlined above. Specifically, a proper subset of $\Phi$, as described in Corollary \ref{cor:diverse group}, can always solve the problem, whereas $\phi^*$ cannot, as indicated in Corollary \ref{cor:best agent}. Note that, according to Assumption \ref{as:ability}, including more agents does not worsen the state in terms of its value; the state either improves or remains the same.
\end{proof}
Again, this result is a simple corollary given the way we have formulated our hypotheses. Let us explain the proof in words. By assumption, we have arranged for the best agent not to always solve the problem. On the other hand, by mere assumption, for every state, there is an agent that can reach a better state. As the number of states is finite, they will reach the optimum in a finite number of steps. Note that the ``diverse group'' includes the best agent. The fact that the ``diverse" group'' outperforms the best agent is trivial, as they include additional agents and, by hypothesis, they do not worsen the solution and improve it for some states. 

Actually,  we do not need the whole group $\Phi$ to outperform the best performing agent. As $\mathbb{E}\left(V; \Phi\right)=1$, it is evident that there exists a $\tilde{\Phi}_0\subseteq \Phi$ such that
\begin{equation}\label{eq:tilde Phi def}
	\mathbb{E}\left(V; \tilde{\Phi}_0\right)>\mathbb{E}\left(V; \{\phi^*\}\right)<1\,.
\end{equation} 
That is the ``selected group'' of the statement of Theorem \ref{th:basic HP}. To connect with the original statement of Hong and Page, let us formulate the following version\footnote{See Remark \ref{rem:Phi0} for several comments on different definitions of $\Phi_R$.}.
\begin{theorem}[Deterministic Hong-Page's Theorem]\label{th:deterministic HP}
Given the hypotheses above and two natural numbers $0<N_1\le N$:
\begin{enumerate}
	\item[$\left(\mathcal{A}_1\right)$] \textbf{"Ability-diversity" assumptions}. The assumptions given above, Assumption \ref{as:unique prob}-\ref{as:group del}.
	\item[$\left(\mathcal{A}_2\right)$:] \textbf{"Counting clones" assumptions}. Assumption \ref{as:clones} and we choose two groups from a selected pool with $N$ clones of $\Phi$ that makes the following possible. In the first group there are $N_1$  clones such that $\tilde{\Phi}_0\subset\{\phi^R_1,\ldots,\phi^R_{N_1}\}=:\Phi_R$, and, in the second, there are $N_1$ clones of the best performing agent, $\Phi_B$.
\end{enumerate}
Then, the performance of $\Phi_R$ is better than $\Phi_B$ in the sense of \eqref{eq:expected value}.
\end{theorem}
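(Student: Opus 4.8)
The plan is to assemble the statement from pieces that are already in place, since the genuine mathematical content has been front-loaded into Corollaries \ref{cor:diverse group} and \ref{cor:best agent}. The whole theorem reduces to the chain
\[
\mathbb{E}\left(V;\Phi_B\right)=\mathbb{E}\left(V;\{\phi^*\}\right)<\mathbb{E}\left(V;\tilde{\Phi}_0\right)\le\mathbb{E}\left(V;\Phi_R\right),
\]
so I would establish the two outer relations separately and then read off the conclusion.

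First I would dispatch $\Phi_B$. As $\Phi_B$ is nothing but $N_1$ clones of the single agent $\phi^*$, the idempotence Assumption \ref{as:idempot} combined with the in-series mechanism of Assumption \ref{as:group del} forces the deliberation to halt immediately after one clone has acted: from any $x$ the first active clone produces $\phi^*(x)$, at which point every clone is stuck. Hence $\phi^{\Phi_B}(x)=\phi^*(x)$ for all $x$, giving $\mathbb{E}(V;\Phi_B)=\mathbb{E}(V;\{\phi^*\})$, which is exactly the mechanism behind Corollary \ref{cor:best agent}. The middle strict inequality is then immediate from the defining property \eqref{eq:tilde Phi def} of $\tilde{\Phi}_0$, whose existence rests in turn on $\mathbb{E}(V;\{\phi^*\})<1$ (Assumption \ref{as:difficult}).

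The only load-bearing step is the right-hand inequality $\mathbb{E}(V;\tilde{\Phi}_0)\le\mathbb{E}(V;\Phi_R)$, a monotonicity principle asserting that enlarging a deliberating group never lowers its performance. I would prove it pointwise in $x$ and then integrate against $\nu$. Since $\tilde{\Phi}_0\subset\Phi_R$ as sets of distinct agents — clone multiplicity being irrelevant, because a clone can advance a state exactly when its original can — every state at which all agents of $\Phi_R$ are simultaneously stuck is a fortiori a fixed point of each agent of $\tilde{\Phi}_0$; thus the common-fixed-point set of $\Phi_R$ is contained in that of $\tilde{\Phi}_0$. A deliberation of $\Phi_R$ may therefore replay any halting path of $\tilde{\Phi}_0$ and, at the point where $\tilde{\Phi}_0$ stops, either stop as well or advance further using the extra agents; by the ability Assumption \ref{as:ability} each additional step weakly increases $V$, so $V\bigl(\phi^{\Phi_R}(x)\bigr)\ge V\bigl(\phi^{\tilde{\Phi}_0}(x)\bigr)$ for every $x$. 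Weighting by $\nu(x)$ and summing yields the inequality.

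I expect the main obstacle to be the non-uniqueness of deliberation sequences noted after Assumption \ref{as:group del}: for a proper subset such as $\tilde{\Phi}_0$ there may be several common fixed points, so the two quantities are a priori order-dependent, and one must check the comparison holds for the chosen deliberation protocol (or, more robustly, for every admissible ordering — which the replay argument above in fact delivers). A clean way to sidestep this altogether is to take $\tilde{\Phi}_0=\Phi$: Corollary \ref{cor:diverse group} guarantees $\mathbb{E}(V;\Phi)=1$, and since $\Phi_R$ then contains every distinct agent of $\Phi$ while $\phi^*$ fails to reach $x^*$ at some state of full $\nu$-measure, one obtains $\mathbb{E}(V;\Phi_R)=1>\mathbb{E}(V;\{\phi^*\})=\mathbb{E}(V;\Phi_B)$ without invoking monotonicity at all.
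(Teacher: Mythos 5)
Your proof follows essentially the same route as the paper's: the official proof likewise reduces the statement to Corollary \ref{cor:diverse group}, Corollary \ref{cor:best agent} and \eqref{eq:tilde Phi def}, treating the monotonicity step (that $\tilde{\Phi}_0\subset\Phi_R$ gives $\mathbb{E}\left(V;\Phi_R\right)\ge\mathbb{E}\left(V;\tilde{\Phi}_0\right)$) as immediate from Assumption \ref{as:ability}. Your replay argument and the fallback choice $\tilde{\Phi}_0=\Phi$ merely make explicit the order-dependence of the deliberation endpoint that the paper's one-line proof glosses over; the decomposition and ingredients are identical.
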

\begin{proof}
	It is trivial by the corollaries given above. Indeed, by the first corollary and \eqref{eq:tilde Phi def}$, 		\mathbb{E}_\nu(V\circ{\phi^{\Phi_R}})>\mathbb{E}_\nu(V\circ{\phi^{\Phi_B}})\,.$ Note that by the second corollary, $\mathbb{E}_\nu(V\circ{\phi^{\Phi_B}})<1$ as $\exists_{\ge 1}~x$ such that $V(\phi^{\Phi_B}(x))<1.$
\end{proof}
In other words, as detailed in Appendix \ref{ap:simpler proof}, the theorem has the following structure with an evident proof:

\vspace{1.7mm}
\noindent\textsc{Theorem}. \textit{{Hypotheses}}: Let there be a pool of agents such that:
\begin{itemize}
	\item they never degrade the solution and where there is always an agent who can improve the solution,
	\item the best performing agent is imperfect by hypothesis,
	\item as it follows from these two points above, we can choose agents randomly until they outperform the imperfect best performing agent.
\end{itemize}
{\textit{Thesis}}: Consequently, this ``random" group will outperform the best performing agent.

So, what did Hong and Page prove in their article? Essentially, they demonstrated that assumption $\mathcal{A}_2$ holds almost surely (after defining some probability measures). See their proof of Lemma 1 and Theorem 1 in \cite{HP04}. This is a not very difficult probabilistic claim that has nothing to do with either ability or diversity, which are contained in the assumptions $\mathcal{A}_1$. It is a probabilistic fact that can be shown, regardless of whether the objects considered are diverse agents, incapable problem solvers, balls in a box, or mathematical functions in a Hilbert space\footnote{That is, the probability apparatus is introduced to ensure that with probability one, we have the groups described in $\mathcal{A}_2$, as further discussed in Section \ref{sec:clone selection}. This approach could be applied to any object, unrelated to diversity, and would be simplified by directly stating $\mathcal{A}_2$ or, even better, presenting a clone-free version of it, as in Proposition \ref{th:basic HP}. In Section \ref{sec:ability trumps div}, detailed in Appendix \ref{ap:ability trumps div th}, we will employ probability in a manner pertinent to an actual deliberative process.}. Nevertheless, this is the heart of proof of their article published in the Proceeding of the National Academy of Sciences. But one might question, given that we have shown that Theorem \ref{th:basic HP} (a trivial restatement of the assumptions) encapsulates all the information regarding diversity and ability, what is the necessity of introducing clones? This is why Thompson say that the theorem ``is trivial\footnote{See Remark \ref{rem:CJTvsHP} on why the same cannot be said of other theorems, like the Condorcet Jury Theorem.}. It is stated in a way which obscures its meaning. It has no mathematical interest and little content." We can compare the previous version, Theorem \ref{th:deterministic HP}, with the original statement:
\begin{theorem}[\cite{HP04}]\label{th:HP PNAS}
	Given the assumption above, let $\mu$ be a probability distribution over $\Phi$ with full support. Then, with probability one, a sample path will have the following property: there exist positive integers $N$ and $N_1$, $N > N_1$, such that the joint performance of the $N_1$ independently drawn problem solvers exceeds the joint performance of the $N_1$ individually best problem solvers among the group of $N$ agents independently drawn from $\Phi$ according to $\mu$.
\end{theorem}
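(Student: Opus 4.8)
The plan is to reduce this probabilistic statement to the two deterministic corollaries already in hand and then supply only the elementary measure-theoretic glue. The entire content beyond the ``ability-diversity'' assumptions $\mathcal{A}_1$ is that the configuration described in $\mathcal{A}_2$ of Theorem \ref{th:deterministic HP} occurs almost surely along a sample path; once that is established, the performance comparison \emph{is} Theorem \ref{th:deterministic HP}. So first I would fix notation: let $\phi_1,\phi_2,\ldots$ be i.i.d.\ draws from $\mu$, and note that since $X$ and hence $\Phi$ is finite and $\mu$ has full support, $p\coloneqq\min_{\phi\in\Phi}\mu(\phi)>0$; in particular $\mu(\phi^*)>0$, where $\phi^*$ is the unique best agent guaranteed by Assumption \ref{as:unique best agent}.

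The coupon-collector step comes next. I claim that almost surely there is a finite (random) index $N_1$ with $\{\phi_1,\ldots,\phi_{N_1}\}=\Phi$: the probability that a fixed agent is absent from the first $n$ draws is at most $(1-p)^n\to 0$, so a union bound over the finitely many agents together with Borel–Cantelli shows every agent appears in finite time. Fixing such an $N_1$ on this event, the first group $\Phi_R\coloneqq\{\phi_1,\ldots,\phi_{N_1}\}$ contains all of $\Phi$, hence in particular $\tilde{\Phi}_0$, so by Corollary \ref{cor:diverse group} it solves the Problem from every initial state and $\mathbb{E}_\nu(V\circ\phi^{\Phi_R})=1$.

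For this fixed $N_1$ I would then run a law-of-large-numbers (or second Borel–Cantelli) step to pick $N>N_1$ so large that the first $N$ draws contain at least $N_1$ copies of $\phi^*$; this is possible almost surely because $\mu(\phi^*)>0$ forces $\phi^*$ to be drawn infinitely often, so the time of its $N_1$-th appearance is a.s.\ finite. By Assumption \ref{as:unique best agent}, $\phi^*$ is the strict individual maximiser of \eqref{eq:expected value}, so the $N_1$ individually best agents among the $N$ draws are exactly $N_1$ clones of $\phi^*$; call this group $\Phi_B$. By Corollary \ref{cor:best agent} (via Assumptions \ref{as:idempot} and \ref{as:group del}) these clones behave as a single $\phi^*$, and by Assumption \ref{as:difficult} some state is left unsolved, so $\mathbb{E}_\nu(V\circ\phi^{\Phi_B})<1$. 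Intersecting the two probability-one events, we obtain, almost surely, integers $N>N_1$ with $\mathbb{E}_\nu(V\circ\phi^{\Phi_R})=1>\mathbb{E}_\nu(V\circ\phi^{\Phi_B})$, which is precisely the asserted property.

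The step needing the most care is the identification of the ``$N_1$ individually best'' group $\Phi_B$: one must invoke the strict uniqueness in Assumption \ref{as:unique best agent} to guarantee that, once at least $N_1$ clones of $\phi^*$ are present, the top $N_1$ slots are filled \emph{exactly} by clones of $\phi^*$ so that Corollary \ref{cor:best agent} applies verbatim, and one must select $N_1$ \emph{before} $N$ so that the coupon-collector index does not itself depend on $N$. Everything else is bookkeeping of two standard Borel–Cantelli arguments, which is exactly why this probabilistic layer carries no information about either diversity or ability.
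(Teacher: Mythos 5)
Your proposal is correct. It follows the same two-layer skeleton as the paper's proof in Appendix \ref{ap:simpler proof} --- a deterministic performance comparison supplied by the corollaries of Section \ref{sec:2}, plus an almost-sure inclusion argument for the clone selection --- but it differs in which deterministic target you aim for. The paper's ``simpler proof'' only arranges for the random group to contain the minimal pair $\{\phi^*,\phi_*\}$ (the choice the appendix labels $N_1^{(3)}$), where $\phi_*$ is an agent improving on the state at which $\phi^*$ gets stuck; the strict inequality of expectations then comes from full support of $\nu$ and a single point of strict improvement. You instead run a coupon-collector argument to get all of $\Phi$ into the random group (the appendix's $N_1^{(1)}$), invoke Corollary \ref{cor:diverse group} to conclude $\mathbb{E}_\nu(V\circ\phi^{\Phi_R})=1$, and compare against $\mathbb{E}_\nu(V\circ\phi^{\Phi_B})<1$ from Corollary \ref{cor:best agent}. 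The paper explicitly discusses your variant in the subsection on the role of $N_1$ and notes it proves the strictly stronger statement of Proposition \ref{prop:other results} (the random group almost surely solves the problem outright), which trivially implies the theorem; the paper's own choice buys a smaller stopping time and isolates exactly how little ``diversity'' is actually needed, while yours makes the triviality of the conclusion even more transparent. Your handling of the $\Phi_B$ side --- using uniqueness of the best agent plus at least $N_1$ occurrences of $\phi^*$ among the $N$ draws to force the top $N_1$ slots to be clones of $\phi^*$ --- is exactly the paper's argument, with your $N\ge \text{(time of the $N_1$-th appearance of }\phi^*)$ playing the role of the paper's $N\ge \frac{2}{\mu(\{\phi^*\})}N_1$.
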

As noted by Thompson, \cite{Tho14}, the theorem as originally stated was false because Assumption \ref{as:injective} was not included. See Section \ref{sec:V inj} for more details. Note that ``$N_1$ individually best problem solvers'' are just clones of the best problem solver (unique by assumption), not, for instance, the first and second according to their expected value (which will perform better). This restriction is imposed by hypothesis.
More precisely, we need a new assumption about the selection criteria for the members of the comparison groups employed in the argument.
\begin{assumption}\label{as:random}
By hypothesis we have the following.
	\begin{enumerate}
		\item The first group is selected randomly from a pool of clones of the elements in $\Phi$. The group size $N_1$ can be adjusted as required.
		\item  Similarly, the second group is chosen independently, but from an identically distributed set of clones of the elements in $\Phi$ of size $N_1$. This selection process follows the stipulations that:
		\begin{itemize}
			\item the group size $N$ can be adjusted as required,
			\item the selection allows for the repetition of the best problem solvers.
		\end{itemize}
	\end{enumerate}
	
\end{assumption}

\subsection{Other results and simpler proof}
\begin{proposition}\label{prop:other results}
	Assuming the conditions of Theorem \textnormal{\ref{th:HP PNAS}} with $N_1$ large enough, with probability one,
	\begin{enumerate}
		\item the randomly selected group of $N_1$ problem solvers will invariably converge on the correct solution without any disagreement and unanimity, 
		\item the ``random group'' always contains the best-performing agent.  
	\end{enumerate}
These facts explain that they can always outperform the best problem solvers.	 
\end{proposition}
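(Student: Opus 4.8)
The plan is to reduce both assertions to a single elementary observation: because $X$ is finite, the set of distinct problem solvers $\Phi$ is finite, and since the sampling distribution $\mu$ has full support, a long enough i.i.d.\ sample from $\mu$ will, with probability one, contain at least one copy of \emph{every} element of $\Phi$. This is a coupon-collector phenomenon, and once the random group $\Phi_R$ exhausts $\Phi$ the two claims follow from results already established, namely Corollary \ref{cor:diverse group} together with Assumptions \ref{as:ability} and \ref{as:group del}.

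First I would make the coupon-collector step precise. Writing $p_{\min}\coloneqq\min_{\phi\in\Phi}\mu(\phi)>0$, the probability that a fixed agent $\phi$ is absent from the first $N_1$ draws is at most $(1-p_{\min})^{N_1}$, so by a union bound the probability that $\Phi\not\subseteq\Phi_R$ is at most $|\Phi|(1-p_{\min})^{N_1}$, which is summable in $N_1$. By the first Borel--Cantelli lemma, with probability one only finitely many sample sizes fail to cover $\Phi$; equivalently, along the sample path each of the finitely many agents appears at some finite time, so for $N_1$ larger than the (finite) maximum of these appearance times the group $\Phi_R$ contains all of $\Phi$. This is the precise reading of ``with probability one, for $N_1$ large enough.''

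Granting coverage, claim (2) is immediate: the unique best-performing agent of Assumption \ref{as:unique best agent} is a single element of $\Phi$, hence belongs to $\Phi_R$. For claim (1), once $\Phi\subseteq\Phi_R$ the in-series deliberation of Assumption \ref{as:group del} run on $\Phi_R$ behaves exactly as the full group of Corollary \ref{cor:diverse group}: from any starting state $x$, as long as the current state is not $x^*$ Assumption \ref{as:diversity} supplies an agent that moves it while Assumptions \ref{as:injective}--\ref{as:ability} force the value to strictly increase, so by finiteness of $X$ the process reaches $x^*$ in at most $|X|$ steps. Since $\phi(x^*)=x^*$ for every $\phi$ by Assumption \ref{as:ability}, the state $x^*$ is a common fixed point, which is exactly the unanimous termination condition of Assumption \ref{as:group del}; thus the group converges to the correct solution $x^*$ with unanimity for every $x$. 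The closing remark then follows by computing $\mathbb{E}_\nu(V\circ\phi^{\Phi_R})=\sum_{x}V(x^*)\nu(x)=1$, whereas Corollary \ref{cor:best agent} gives $\mathbb{E}_\nu(V\circ\phi^{\Phi_B})<1$.

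The step I expect to carry the only genuine content is the coupon-collector argument, and there the essential hypothesis is the \emph{finiteness} of $\Phi$ (inherited from the finiteness of $X$) together with full support of $\mu$: without finiteness, full support alone would not guarantee coverage, and the proposition could fail. Everything after coverage is a direct appeal to the corollaries and the deliberation assumption, so I would be careful only about the logical order of the quantifiers ``with probability one'' and ``for $N_1$ large enough,'' reading them as an almost-sure statement about the existence of a finite threshold along each sample path.
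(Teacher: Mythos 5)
Your proof is correct and follows the same overall decomposition as the paper's: establish that, almost surely, for $N_1$ large enough the random group contains a copy of every element of $\Phi$, then invoke Corollary \ref{cor:diverse group} (plus the fact that $x^*$ is a common fixed point) for claim (1), and observe that coverage of $\Phi$ trivially yields claim (2). The only place you diverge is in how you prove the coverage step: the paper appeals to the Strong Law of Large Numbers, defining in Remark \ref{rem:N1 large} a threshold $N_1\coloneqq\max_{\phi\in\Phi}\lbrace N_{\phi}, \frac{2}{\mu\left(\{\phi\}\right)}\rbrace$ so that the empirical frequency of each agent stays above $\mu(\{\phi\})/2$, whereas you use a union bound over the finite set $\Phi$ together with the first Borel--Cantelli lemma (a coupon-collector argument). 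Your route is more elementary and arguably cleaner for what is actually needed --- mere eventual coverage rather than convergence of frequencies --- and your closing caution about the quantifier order (``almost surely there exists a finite threshold along the sample path'') is exactly the right reading and is consistent with the paper's treatment of $N_1=N_1(\omega)$ as a path-dependent stopping time. One could even dispense with Borel--Cantelli here since the non-coverage events are nested in $N_1$, but this is a cosmetic point; the argument as written is sound.
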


\begin{proof}
	This is straightforward from Corollary \ref{cor:diverse group} and that, following the first part of Assumption \ref{as:random}, the first group includes a copy of $\Phi$ or $\Phi_0$ (Remark \ref{rem:Phi0}) $\mu$- almost surely. It is also Lemma 1 in \cite{HP04}. There is unanimity as, for every state $x\in X$, the group solution is $x^*$, where everyone accepts as a solution, $\phi(x^*)=x^*$. The second statement follows from the Strong Law of Large Numbers, see Remark \ref{rem:N1 large} for more details.
\end{proof}
For instance, the following $\Phi\coloneqq\{\phi_1,\phi_2,\phi_3\}$ such that
\begin{center}
	\begin{tabular}{c|c|c|c|c}
		$x$ & $V (x)$ & $\phi_1 (x)$ & $\phi_2 (x)$ & $\phi_3 (x)$ \\
		\hline
		$a$ & $1/4$ & $b$ & $a$ & $b$ 	\\
		$b$ & $1/2$ & $b$ & $c$ & $b$ 	\\
		$c$ & $3/4$ & $d$ & $c$ & $c$ 	\\
		$d$ & 1 	 & $d$ & $d$ & $d$ 			\\
	\end{tabular}
\end{center}
satisfies the hypotheses of the theorem, but, if the ``random'' group does not include the best performing agent\footnote{Assumptions can be made to exclude the best performing agent, while ensuring that there is another agent that performs as the best one does when needed. Consequently, it is no surprise that the theorem still holds. However, this approach is purely ad hoc.}, $\phi_1$, then it cannot outperform $\phi_1$. 

That is, given the same hypotheses, one could simply select the $N_1$ of Proposition \ref{prop:other results} and assert a stronger result. It is not just that the random group outperforms the best
agent, but they also almost surely find the optimal solution. So, with the same hypotheses,
one can prove a stronger result. Shouldn’t one then fully explore the implications of these
hypotheses and reject them if they lead to unrealistic conclusions? That is, given that $N_1=N_1(\omega)$ is \textit{not fixed} but rather a ``stopping time" that halts when the desired members, as per the selector's or chooser's preference, are included, one might ask why the selector should stop merely to outperform the best agent if the correct solution can always be reached when the hypotheses are met. This more comprehensive result trivially implies the one in Theorem \ref{th:HP PNAS}. See Appendix \ref{ap:simpler proof} for more details.

\subsubsection{{Triviality of the result: simpler proof.}}
\label{sec:simpler proof}
In fact, a simpler proof of the theorem can be constructed based on that simple fact. This approach also exposes \textit{the theorem's triviality given its underlying assumptions}. See Appendix \ref{ap:simpler proof} for the technical details. An intuitive explanation goes as follows:

The first paragraph of the proof in Appendix \ref{ap:simpler proof} corresponds to the part of the theorem where diversity and ability are put into play. This essentially reduces to the following triviality: by assumption, there are two distinct agents - the best agent and another agent - and a state $x_*$ for which the best agent does not provide the optimal solution. However, the other agent can improve upon the solution of the best agent for this state. This implies that the performance of a group consisting of the best agent and this additional agent surpasses the performance of the best agent alone, at least for some states. For other states, again by assumption, adding an agent does not worsen the situation, thus completing the deterministic clone-free part of the proof. Subsequently, we apply the strong law of large numbers to ensure that, under the setting of Assumption \ref{as:random}, the random group will always contain copies of these two agents, and the best-performing agents are all copies of the unique best-performing agent.

\section{Removing technical hypotheses: Counterexamples}\label{sec:3}
The theorem depends critically on certain assumptions that we are going to analyze now. In this section, we will refrain from critiquing certain empirical hypotheses, such as the assumption that agents share the same concept of problem-solving (Assumption \ref{as:unique prob}, although see Remark \ref{rem:Rawls}), or that they can recognize the solution ($\phi(x^*)=x^*$). Such critiques largely pertain to the plausibility inherent in every model, and one could always defend these by invoking ideal conditions, much as one might assume frictionless systems in physics. Although these critiques can be adequate, a different critique, following a ``Moorean style," will be presented in Section \ref{sec:ability trumps div} and Section \ref{sec:Land hyp}. There, we will revisit some empirical hypotheses (not the ones mentioned above), slightly modifying them to enhance their plausibility, which may lead to contrary conclusions. We will also see how those hypotheses lead to contradictions. However, in this section, we wish to focus on certain technical assumptions, often overlooked\footnote{For instance, see Section \ref{sec:rem book hypotheses}.}, that are essential for the theorem to hold. Without these assumptions, the theorem fails, giving our counterexamples. These technical assumptions, by their nature, involve facets of the model (not the underlying reality) that are difficult to verify, hence making it challenging to argue for their plausibility. This raises the question of why we should adopt these hypotheses, rather than others.
\begin{remark}
The distinction between empirical and technical assumptions might seem somewhat arbitrary, but it nonetheless serves a useful purpose in our analysis. For instance, if we apply the theorem to a jury in a criminal trial, how do we model clones of the jurors and how do we select them from (infinite) groups? Similarly, as we will see, the values of \(V\) (apart from \(V(x^*)=1\), the right option) are important for the theorem to hold; if certain conditions are not met, then the thesis of the theorem fails. However, how could one verify that the hypotheses on \(V\) hold when \(V\) is not empirically observable? What is the value or how to determine it for the solution of being charged with different combinations of crimes while being innocent? Recall also Assumption \ref{as:unique prob}, \(V_\phi\) could be more easily observed, but we need the ``social'' function $V$.
\end{remark}

\subsection{$V$ is an injection, Assumption \ref{as:injective}}\label{sec:V inj}
This was pointed out by Thompson and we reproduce it here with minor modifications. This assumption was not originally in \cite{HP98, HP04}, making the theorem false.

Let $X = \{a, b, c, d\}$. Define $V (x)$ and three agents $\phi_1$, $\phi_2$ and $\phi_3$ according to the table below:

\begin{center}
	\begin{tabular}{c|c|c|c|c}
	$x$ & $V (x)$ & $\phi_1 (x)$ & $\phi_2 (x)$ & $\phi_3 (x)$ \\
	\hline
	$a$ & 1/3 & $d$ & $c$ & $b$ \\
	$b$ & 2/3 & $b$ & $c$ & $b$ \\
	$c$ & 2/3 & $c$ & $c$ & $b$ \\
	$d$ & 1 & $d$ & $d$ & $d$ \\
\end{tabular}

\end{center}
The set of agents $\Phi = \{\phi_1, \phi_2, \phi_3\}$ satisfies all
the hypotheses of the theorem. The agents $\phi_1, \phi_2, \phi_3$
have average values $5/6, 9/12, 9/12$ respectively, so
$\phi_1$ is the “best” agent. Notice that all three agents
acting together do not always return the point $d$,
where the maximum of $V$ occurs. Indeed all three agents
acting together work only as well as $\phi_1$
acting alone. Hence in this case, no group of agents
can outperform $\phi_1$, or, equivalently, multiple
copies of $\phi_1$, hence no $N$ and $N_1$ exist which
satisfy the theorem. See Remark \ref{rem:inj reply} on why some replies given to Thompson on this regard are not satisfactory.

\subsection{Unique best agent, Assumption \ref{as:unique best agent}}\label{sec:unique best}
To justify this assumption, Hong and Page write:
\begin{quote}
Let $\nu$ be the uniform distribution. If the value function $V$ is one to one, then the uniqueness assumption is satisfied. 
\end{quote}
This mathematical statement is false. Indeed, let us consider $X = \{a, b, c, d\}$. Define $V (x)$ such that $0<V(a)<V(b)<V(c)<1$, $V(b)<\frac12\left(V(a)+1\right)$ and $n$ agents $\phi_1$, $\phi_2$ and $\phi_i$ according to the table below:

\begin{center}
	\begin{tabular}{c|c|c|c|c}
	$x$ & $V (x)$ & $\phi_1 (x)$ & $\phi_2 (x)$ & $\phi_i (x)$ \\
	\hline
	$a$ & $V(a)$ & $a$ & $c$ & $\phi_i(a)$ 	\\
	$b$ & $V(b)$ & $b$ & $c$ & $\phi_i(b)$ 	\\
	$c$ & $V(c)$ & $d$ & $c$ & $\phi_i(c)$ 	\\
	$d$ & 1 	 & $d$ & $d$ & $d$ 			\\
\end{tabular}
\end{center}

The set of agents $\Phi = \{\phi_1, \phi_2, \phi_i\}_{i=3}^N$ satisfies all
the hypotheses of the theorem and are ordered according to its expected value. If we set
\begin{equation*}
	V(c)\coloneqq \frac13\left(V(a)+V(b)+1\right)\,,
\end{equation*}
then $\phi_1, \phi_2$ have the same ``expected ability'' under the uniform measure. Furthermore, now the theorem is false. Indeed,
$$
\phi_1\circ \phi_2(a)=d,~~\phi_1\circ\phi_2(b)=d,~~\phi_1(c)=d,~~\phi_1(d)=\phi_2(d)=d\,.
$$
In this case, no group of agents
can outperform $\{\phi_1, \phi_2\}$, no $N$ and $N_1$ exist which satisfy the theorem.
\begin{remark}
Here, we have demonstrated an example involving two agents possessing identical ``expected abilities". Of course, in real-world applications, there would likely be uncertainty or variability in the value of $\mathbb{E}_\nu\left(V\circ\phi\right)$; thus, it would be prudent to consider an interval rather than a single point. In such circumstances, the top-performing agents might comprise multiple individuals with high probability. However, as demonstrated, the theorem may not necessarily hold in these scenarios.
\end{remark}
\subsection{Clones performance}\label{sec:clone performance}
As we saw, simply by the assumptions, one million Einsteins, Gausses or von Neumanns are the same as just one of them. Indeed, mathematically, by Assumption \ref{as:clones}, $\{\phi,\ldots,\phi\}$ is a well-defined set of problems solvers such that
\begin{equation*}
	\phi^{\{\phi,\ldots,\phi\}}\stackrel{\text{Assumption }\ref{as:group del}}{=}\phi\circ\ldots\circ\phi\stackrel{\text{Assumption }\ref{as:idempot}}{=}\phi.
\end{equation*}
In other words, by the way the deliberation is structured and the idempotency assumption, a million clones of the same agent are equivalent to one. 
Again, those are just the assumptions they arbitrarily made. But this may not make much sense if we want to apply it to real-life scenarios. More realistic versions could be:
\begin{itemize}
	\item \textit{Improvement}: $V\circ\phi\circ\phi \ge V\circ\phi$ (strict inequality for some points). In other words, if an agent (competent) produced a solution after a certain amount of time, say one hour, it would provide a better answer if it had one-million hours, or if a ``clone'' could pick up where he left off.
	\item \textit{Work in parallel\footnote{As a technical note, now $\{\phi, \phi\}$ should be considered a multiset (the multiplicity distinguishes multisets).}}: $V\circ \phi^{\{\phi, \phi\}}\ge V\circ \phi$ (strict inequality for some point). In other words, one can imagine that a group of Einsteins would not work sequentially, always producing the same result, but would divide the work, resources, focus, etc. to produce a better answer once they have put all of their findings together.
\end{itemize}
It is not clear which is the best way to treat formally the performance of clones in their theorem. But as the previous arguments show, even in conditions of ideal theorizing, there are more reasonable assumptions that could have been chosen. However, these alternative assumptions do not guarantee the conditions required for the theorem to hold. Otherwise, as $N_1\to\infty$, no group of agents could generally outperform ${\phi,\stackrel{N_1}{\ldots},\phi}$; we cannot guarantee the existence of an $N$ that would satisfy the theorem.
\begin{remark}
Following Jason Brennan's recurrent ``magic wand'' thought experiment, let's imagine we are confronted with an exceedingly difficult problem to solve, for instance, the Navier-Stokes Millennium Problem. Suppose we have a magic wand at our disposal that can create agents to solve the problem for us. Should we choose Terence Tao, or should we use the magic wand to create 100 Terence Taos working together to solve our problem? According to the assumption of the Hong-Page Theorem, this magic wand would be useless. 
\end{remark}
See Remark \ref{rem:X inf} for why the critique still holds in the case of infinite $X$.

\subsection{Selection of clones}\label{sec:clone selection} Similarly, the assumptions for clone selection appear to be somehow arbitrary, and it is not justified why these assumptions, that facilitate reaching the conclusion that ``diversity trumps ability'' should be chosen over other perfectly reasonable and simpler ones:

\begin{itemize}
	\item The choice of two independent groups seems arbitrary. Why not fix $N$ and, from the same group, select a random subgroup of size $N_1$, as well as the best $N_1$ problem solvers, and then compare? In such a scenario, the theorem might not hold. Indeed, we need $N\gg N_1$ such that the Strong Law of Large Numbers (SLLN) applies, $\mu\left(\{\phi^*\}\right)$ can be very small. However, a random group of $N_1$, $\Phi_{N_1}$, agents might not include all the needed problem solvers of $\Phi$, thus we cannot guarantee a probability of one, as the theorem does. That is, for $N>N_1$, it could be the case that, see Remark \ref{rem:Phi0},
	\begin{equation*}
		\mathbb{P}\left(\tilde{\Phi}_0\subset \Phi_{N_1}\right)<1\,.
	\end{equation*}
	\item Permitting repetition is also arbitrary. We could, for instance, select the best problem solvers without allowing repetitions, as is standard in probability problems. Recall from Section \ref{sec:clone performance} that adding a repeated clone is equivalent to adding nothing. This could prevent the paradoxical result that, by mere hypotheses, when choosing the best problem solvers from a group of size $N$ is more beneficial when the group size is relatively small, i.e., for choosing the best it is preferable to have less options available. However, if we prohibit repetitions, then the theorem does not necessarily hold as the best problem solvers can include the ones (without taking repeated clones into account) of the ``random" group, so no $N$ and $N_1$ exist which 	satisfy the theorem.
\end{itemize}
We should note the general approach adopted by Hong and Page, see also the description of the infinite pool of clones in Section \ref{sec:rem book hypotheses}.
\begin{enumerate}
	\item They introduce randomness into their model by assuming an infinite number of clones for each agent.
	\item They invoke the SLLN (or similar results) to ensure that the frequency of appearance converges to the original probability, given by $\mu$.
	\item As the group sizes involved in the SLLN could be large, $N_1, N$ are not fixed, but also random. For instance, $N_1=N_1(\omega)$ is chosen such that the ``random" group contains enough agents to outperform $\phi^*$. This cancels out the randomness previously introduced.
\end{enumerate}
All in all, the steps detailed above effectively eliminate the randomness that was introduced, transforming the framework into a \textit{de facto} deterministic model. That is, after the ``complex'' probabilistic steps, with probability one, the groups being compared are, essentially\footnote{See Appendix \ref{ap:simpler proof} for more details.}, the ones in Theorem \ref{th:basic HP}. But, as we saw, if the results were presented this way, the triviality would be manifest. In the next section, we will remove clones and explore the consequences of a fair comparison of ability and diversity in the Hong-Page setting.

\section{New Hong-Page style theorem: Ability trumps diversity}
\label{sec:ability trumps div}
The objective of this section is to make a fairer comparison of diversity and ability while maintaining the Hong-Page framework.  We are going to state and prove a new version of the Hong-Page theorem such that the hypotheses are going to be of the same kind and as plausible (or even more as we will see, for instance, no need of clones and disagreement is possible) as the ones in the Hong-Page theorem. Nevertheless, we will reach the opposite conclusion, ``ability trumps diversity". We are not claiming that this theorem has any social content; it simply shows how the justification of the theorem is driven by the selection of assumptions that might be overly strict or arbitrary.

The moral would be if we create two groups from the group in the original theorem – one in which we make the minimal reduction in ability while ensuring full diversity, and another in which we considerably reduce diversity while ensuring ability – the less diverse group would systematically outperform the fully diverse group. In other words, ability trumps diversity. Here we present a non-technical exposition, see Appendix \ref{ap:ability trumps div th} for all the mathematical details.

\subsection{The new assumptions}
We start with Hong-Page's framework, but redefining some of their assumptions to incorporate the potential for disagreement within a group of agents, introducing a realistic element into our model. We describe a situation where one agent's suggestion can be directly countered by another, leading to a deadlock, symbolizing a disagreement.
Unlike the original Hong-Page Theorem's assumptions, where agreement is always achieved, our model accepts the possibility of disagreement, reflecting a more lifelike scenario where unanimous decisions might not always be possible. Clones are not needed now, so idempotency is not necessary.

Also, we then introduce a probability measure to account for the likelihood of each agent influencing the decision process. This is done ensuring that every agent has a chance to contribute, emphasizing inclusivity in the decision-making process.

\subsubsection{The ability group}
We define an ability-focused group, $\Phi^\textnormal{A}$, with the following characteristics:
\begin{itemize}
	\item \textbf{Ability:} Every agent in this group is capable of maintaining or improving the current state, ensuring that their contributions are always constructive.
	\item \textbf{Common knowledge:} There exists a subset of states, $X_{\textnormal{CK}}$, where all agents in this group agree on the solution, showing non-diversity and competence.
	\item \textbf{Limited diversity:} While this group maintains some level of diversity, it is notably reduced outside of $X_{\textnormal{CK}}$, with a unique agent providing distinct solutions in these scenarios.
\end{itemize}

\subsubsection{The diversity group}
Conversely, the diversity group, $\Phi^\textnormal{D}$, is characterized by:
\begin{itemize}
	\item \textbf{Full diversity:}  its wide range of perspectives, ensuring that for almost every problem state, there's an agent capable of offering a different perspective.
	\item \textbf{Minimal ability loss:} this group includes the minimal loss of ability possible. However, this will serve to demonstrate that this is enough to show that diversity does not trump ability.
\end{itemize}

\subsection{The theorem}
\begin{theorem}[Ability Trumps Diversity Theorem]\label{th:ability main}
	Given the defined groups $\Phi^\textnormal{A}$ and $\Phi^\textnormal{D}$ with their respective properties, it is shown that the group selected for its ability rather than diversity performs better in reaching the optimal solution.
\end{theorem}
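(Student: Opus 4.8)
The plan is to establish the comparison deterministically first and then observe that the probabilistic wrapper (the random draw of agents governed by the measure introduced above) adds nothing essential, exactly as in the critique of Theorem \ref{th:HP PNAS}. First I would fix the performance functional: for a group $\Psi$ the output $\phi^{\Psi}(x)$ on initial state $x$ is the terminal state of the in-series deliberation of Assumption \ref{as:group del}, \emph{except} that, disagreement now being permitted, the process also halts when two agents lock into a mutually reverting pair (a deadlock), returning the disputed state rather than $x^*$. Performance is then measured by $\mathbb{E}_\nu(V\circ\phi^{\Psi})$ as in \eqref{eq:expected value}.

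The heart of the argument is that $\Phi^\textnormal{A}$ reaches the optimum from every initial state while $\Phi^\textnormal{D}$ does not. For $\Phi^\textnormal{A}$ I would argue as in Corollary \ref{cor:diverse group}: since every agent has the ability property (Assumption \ref{as:ability}), the value along any deliberation is non-decreasing, and since on $X_{\textnormal{CK}}$ the agents agree while off $X_{\textnormal{CK}}$ the single distinguished agent always supplies an improving move, at no non-optimal state can a deadlock arise; hence the climb of strictly increasing values terminates at $x^*$ for every $x$, giving $\mathbb{E}_\nu(V\circ\phi^{\Phi^\textnormal{A}})=1$. For $\Phi^\textnormal{D}$ I would exploit the ``minimal ability loss'': the single agent that is allowed to degrade the solution (equivalently, to contradict another agent's improving move) produces, together with full diversity, a reverting pair at some state $x_\ast\neq x^*$, so the deliberation deadlocks below the optimum on a set of positive $\nu$-measure. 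This yields $\mathbb{E}_\nu(V\circ\phi^{\Phi^\textnormal{D}})<1$, and the two estimates combine to give the claimed strict inequality.

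To make the construction concrete and to verify that all the stated properties of $\Phi^\textnormal{A}$ and $\Phi^\textnormal{D}$ are mutually consistent, I would exhibit explicit tables on a small $X$ (in the style of the examples in Sections \ref{sec:V inj} and \ref{sec:unique best}): one realizing $\Phi^\textnormal{A}$ with its common-knowledge block $X_{\textnormal{CK}}$ and its lone diverse agent, the other realizing $\Phi^\textnormal{D}$ with full diversity but one agent failing ability at a single state. Finally, to recover a statement parallel to Theorem \ref{th:HP PNAS}, I would append the probability measure and invoke the Strong Law of Large Numbers exactly as in Proposition \ref{prop:other results}, so that a randomly drawn ability group almost surely contains the agents needed to reach $x^*$ while a randomly drawn diversity group almost surely inherits the deadlocking pair; the deterministic inequality then holds with probability one.

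The main obstacle I anticipate is not the inequality itself but pinning down the deliberation semantics under disagreement so that ``minimal ability loss'' provably forces a strict performance gap: one must specify precisely how a deadlock is resolved and confirm that the smallest admissible departure from Assumption \ref{as:ability} already drops $\mathbb{E}_\nu(V\circ\phi^{\Phi^\textnormal{D}})$ strictly below $1$, while simultaneously checking that the reduced diversity of $\Phi^\textnormal{A}$ never induces a deadlock of its own. Getting these two thresholds to straddle --- diversity sacrificed but the optimum still reached, versus ability barely sacrificed but the optimum lost --- is the delicate balance on which the theorem, like its Hong--Page mirror image, ultimately rests.
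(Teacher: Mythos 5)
Your proposal follows essentially the same route as the paper's proof in Appendix B: the ability group's uniform non-degradation rules out any reverting pair, so deliberation climbs to $x^*$ from every initial state, while the lone ability-losing agent of the diversity group, combined with full diversity, creates a two-state disagreement cycle that halts the process below the optimum, giving $\mathbb{E}(V\circ\Phi^{\mathrm{A}})=1>\mathbb{E}(V\circ\Phi^{\mathrm{D}})$. The only point to tighten is where the randomness lives: in the paper it is the full-support measure $\mu_x$ over which agent speaks next (so one shows that remaining stuck forever has path-probability zero for $\Phi^{\mathrm{A}}$ and that the deadlock cycle occurs with positive path-probability for $\Phi^{\mathrm{D}}$), not a random draw of agents into groups, so your closing SLLN step is unnecessary and your ``positive $\nu$-measure'' should instead be ``positive probability over deliberation paths.''
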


This theorem underscores the importance of selecting problem solvers not just for their diverse perspectives but for their ability to constructively contribute to solving complex issues, challenging the notion that diversity trumps ability. See Appendix \ref{ap:ability trumps div th} for a fully detailed exposition.

\section{Hong and Page's misuse of mathematics: a simple proposition with unnecessary complexity}\label{sec:6}

\subsection{Adding unnecessary complexity to a simple fact}\label{sec:obscuring}
The Hong-Page theorem is, in essence, a misuse of mathematics in the following sense: it employs probability techniques, such as the Borel-Cantelli lemma (as my simpler proof in Appendix \ref{ap:simpler proof} demonstrates, unnecessarily), to prove claims unrelated to diversity. This approach makes the theorem, about diversity and social intelligence, inaccessible to individuals outside the mathematical field. Essentially, it uses mathematics that complicate a simple fact rather than mathematics that simplify complex relations.

Indeed, as we saw in Theorem  \ref{th:basic HP} the theorem's conclusion—that a group performs better than the best single individual—is inevitable by construction, by the way the theorem's premises are structured. It posits two fundamental hypotheses: first, that the ``best" individual agent, $\phi^*$, cannot always solve the problem optimally, and second, that a diverse group $\Phi$ of agents can always find an optimal solution. When these assumptions are in play, the conclusion of the theorem (that the diverse group outperform the best agent) is logically guaranteed. 

But, although all related to diversity is there, clone existence and selection is introduced, to invoke the probability apparatus. This is done in the second set of assumption of Theorem \ref{th:deterministic HP}. They define a probability space and prove that, if we can select clones of $\Phi$ infinitely, with probability one the first group will contain at least one copy of the necessary elements of $\Phi$ and the second group will be chosen so that is made only of copies of $\phi^*$. Using the previous paragraph, the conclusion follows directly. This constitutes the heart of their article's proof published in the Proceedings of the National Academy of Sciences. But, since we've shown that Theorem \ref{th:basic HP}—a simple restatement of the assumptions—encapsulates all information regarding diversity and ability (the probabilistic part could be applied to anything, like colored stones in a box), one may question the necessity of introducing clones in the first place. Thus, it appears that the theorem's complexity may stem more from the unnecessary second step than from a deep, mathematical truth about diversity and ability, see Section \ref{sec:clone selection} for more details.

Thus, while the Hong-Page theorem uses mathematical techniques, its conclusion is more a trivial product of its constructed premises than a deep, unexpected and universal truth revealed through rigorous mathematical exploration, see also Remark \ref{rem:CJTvsHP}.

\subsection{Stretching the theorem to answer questions outside its range of applicability}\label{sec:answer}
In \cite{HP04}, they say:
\begin{emphasisquote}
These results still
leave open an important question: Can a functionally diverse
group whose members have less ability outperform a group of
people with high ability who may themselves be diverse? \textbf{The
main result of our paper addresses exactly this question.}
\end{emphasisquote}
They insist:
\begin{emphasisquote}
	To make a more informed decision, the organization administers a test to 1,000 applicants that is designed to reflect their individual abilities in solving such a problem. Suppose the applicants receive scores ranging from 60\% to 90\%, so that they are all individually capable. Should the organization hire (i) the person with the highest score, (ii) 20 people with the next 20 highest scores, or (iii) 20 people randomly selected from the applicant pool? Ignoring possible problems of communication within a group, the existing literature would suggest that ii is better than i, because more people will search a larger space, but says little about ii vs. iii. The intuition that agents with the highest scores are smarter suggests that the organization should hire ii, the individually best- performing agents. The intuition that the randomly selected agents will be functionally diverse suggests that the organization should hire iii, the randomly selected ones. \textbf{In this paper, we provide conditions under which iii is better than ii.}
\end{emphasisquote}

However, this is a mistake, incorrect. By Theorem \ref{th:deterministic HP}, the groups being compared consist of clones that include, at the very least, all agents necessary to consistently outperform the best agent. These are contrasted with clones of the best agent which, by assumption, are equivalent to the best agent alone. It's noteworthy that the best agent is generally included in the first group with a high probability, even one or 100$\%$, as indicated in Proposition \ref{prop:other results}. 
To put it another way, if we consider only one copy of each agent (because multiple copies are redundant by assumption, as discussed in Section \ref{sec:clone performance}), the groups being compared are almost\footnote{Or, more generally, a set \(\Phi_R\) that is a subset of \(\Phi\) and definitely includes \(\tilde{\Phi}_0 \text{ or } \Phi_0\), but might also comprise other elements of \(\Phi\). See Appendix \ref{ap:simpler proof} for further details. This distinction is largely inconsequential for the argument as elucidated in the appendix.} all agents, i.e., \(\Phi\) versus the autoimposed imperfect best performing agent, i.e., \(\{\phi^*\}\). It should be emphasized that \(\{\phi^*\}\) is a subset of \(\Phi\). There is no element of random selection involved, as elaborated in Section \ref{sec:clone selection}. Therefore, a more appropriate comparison would be:
\begin{enumerate}[label=\roman*]
	\item the person with the highest score, 
	\item 20 people with the next 20 highest scores, 
	\item 20 people randomly selected from the applicant pool,
	\item a subset of the 1000 applicants that contains, at a minimum (and possibly more), enough agents to outperform the best-performing agent.
\end{enumerate}
The Hong-Page paper deals with i) versus iv), a triviality, \textbf{not}, as they explicitly claim, ii) versus iii).

During a conference\footnote{The rationale for analyzing a presentation by Page at the ECB is twofold. Firstly, such presentations can potentially reach a broader audience than academic papers, thus significantly impacting public understanding of this theorem. It is therefore appropriate to examine whether there exists a discrepancy between the motte (actual theorem) and the bailey (public discourse). Secondly, these presentations offer insights into the author's simplified, yet ideally accurate, interpretations of his theorem, tailored for an informed and technical audience. Thus, analyzing it here provides valuable context and perspective for our discussion.} at the European Central Bank (ECB), Page\footnote{See also Section \ref{sec:rem book hypotheses} for more.} stated:

\begin{quote}
	I create a group of the 20 best agents - the best individuals - and I compare them to a random group of 20 agents [...] it turns out though if you do the math on this, the diverse group almost always outperforms the other group if you use reasonable-sized groups, like groups of size 10 or 20 [...] the paper and model I just showed you where diverse groups do better than random groups was written by myself and Lu Hong [...]
\end{quote}

He used Figure \ref{fig:pageecb} to illustrate this point. However, as we have mentioned before, this representation is not directly related to the theorem. In the ``Alpha Group", the best agent, 138, should be the only member, and this agent could also be included in the Diverse Group, along with all the other agents, see Figure \ref{fig:pageecb_actual}. Furthermore, groups of size 10 or 20 may not be large enough for the SLLN to hold, especially if $\mu\left(\lbrace\phi\rbrace\right)$ is small enough for some agents.
\begin{figure}
	\centering
	\begin{subfigure}{0.49\linewidth}
		\centering
		\includegraphics[width=\linewidth]{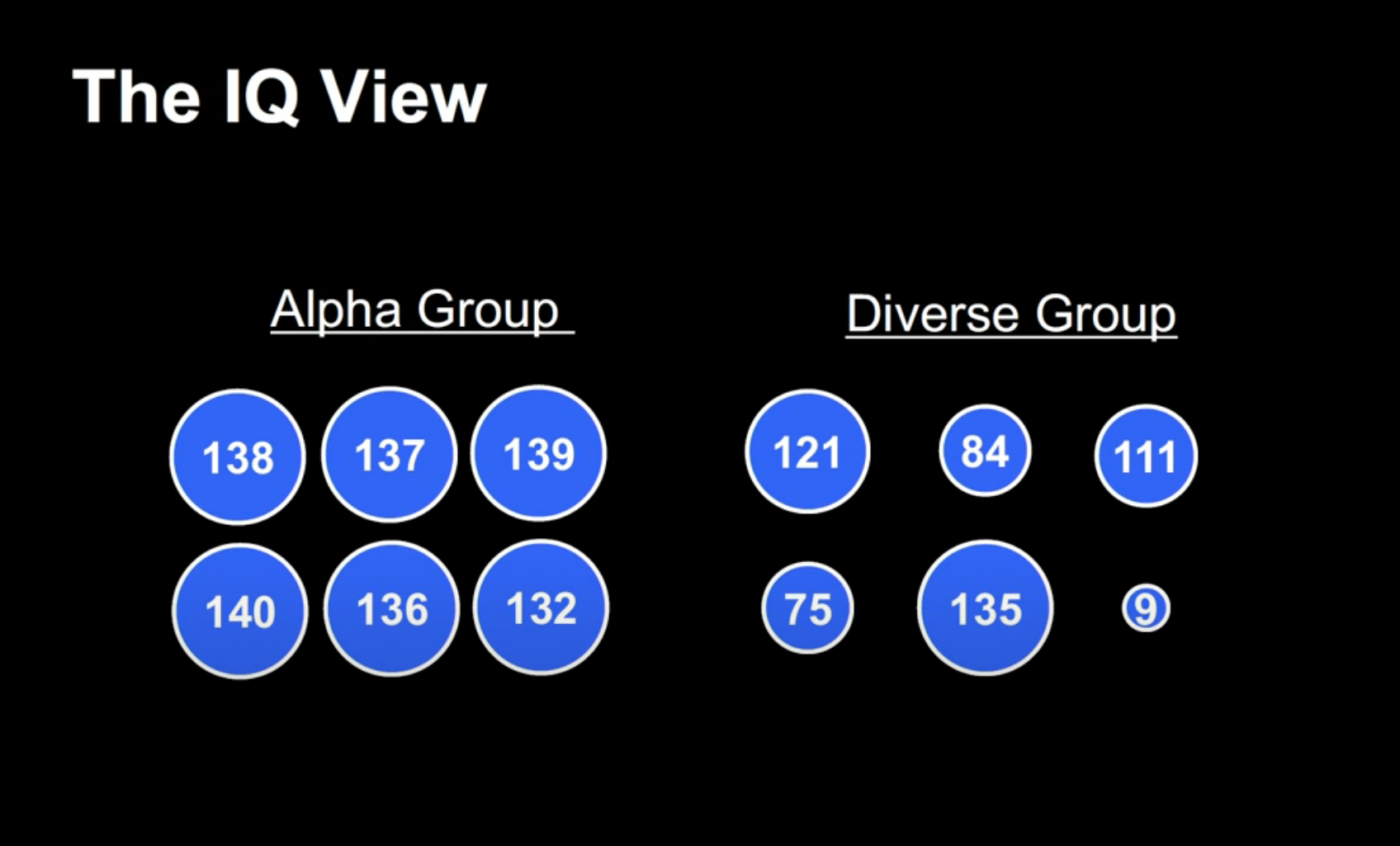}
		\caption{The slide used by Scott Page in his conference at the ECB.}
		\label{fig:pageecb}
	\end{subfigure}
	\hfill
	\begin{subfigure}{0.49\linewidth}
		\centering
		\includegraphics[width=\linewidth]{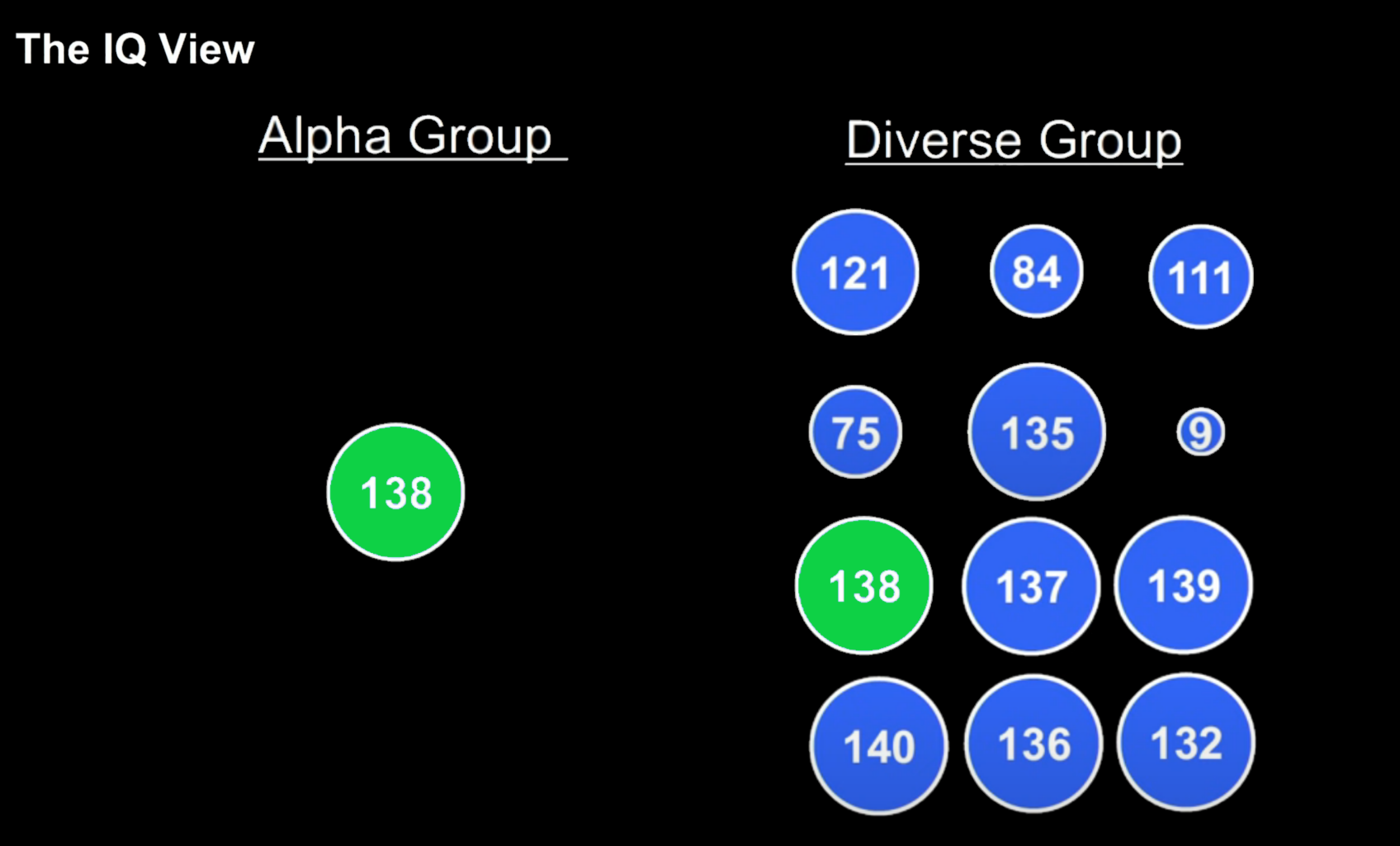}
		\caption{A presentation slide that better captures the essence of the theorem.}
		\label{fig:pageecb_actual}
	\end{subfigure}
	\caption{Comparison of the slides. The Hong-Page theorem ostensibly claims that the diverse group outperforms the alpha group. However, the groups presented in Page's slides do not represent the actual groups used for the theorem. The best agent (imperfect by their hypothesis) is shown in green. The group on the right, by their hypotheses, can always reach the correct solution. We are taking $N_1=N_1^{(1)}$ from Appendix \ref{ap:simpler proof}. For other possible $N_1$ and the corresponding figures, please refer to that appendix.}
	\label{fig:pageecb_comparison}
\end{figure}
In the same conference at the ECB, he further states:

\begin{quote}
	As the problem becomes complex, the best team doesn't consist of the best individuals. Why? Because the best individuals tend to be similar and what you really want on hard problems is diversity.
\end{quote}
Similarly, in \cite{Pag07}, he states:
\begin{emphasisquote}
	The reason diversity trumps ability is not deep: \textit{The best problem solvers likely have similar perspectives and heuristics}. The random problem solvers bring diverse ways of thinking. Therefore, the best problem solvers all	get stuck in the same places. The random problem 	solvers don't.
\end{emphasisquote}

However, these statements seems to confuse an assumption with a factual result. The claim that ``the best individuals are similar" (actually, clones of the same agent) is not a derived conclusion, but a trivial consequence of the presuppositions, the proof of this claim cannot be found in \cite{HP04}; it's established by assumption, Section \ref{sec:clone selection}. Furthermore, as explored in Section \ref{sec:ability trumps div}, even when conducting a fair comparison between ability and diversity - and even when the ability group is characterized by relatively homogeneous problem solvers - ability can still outperform diversity. Therefore, this statement appears to be misguided.
\subsection{Misusing the prestige of mathematics}
When Page claims \cite{Pag08}:

\begin{quote}
	This theorem ["Diversity Trumps Ability"] is no mere metaphor or cute empirical anecdote that may or may not be true ten years from now. It is a mathematical truth.
\end{quote}

It is as accurate as asserting:

\begin{quote}
	If $p\land q$, then $p\land q\,,$ is no mere metaphor or cute empirical anecdote that may or may not be true ten years from now. It is a mathematical truth.
\end{quote}
To be more precise, by Corollary \ref{cor:diverse group} and Proposition \ref{prop:other results}, the logical structure of the theorem is as follows:

\begin{enumerate}
	\item \textit{Hypothesis 1}, $H_1$: The best agent cannot always solve the problem.
	\item \textit{Hypothesis 2}, $H_2$: The ``diverse group'' can always solve the problem (or\footnote{See Remark \ref{rem:Phi0} and Appendix \ref{ap:simpler proof}.} outperform the best performing agent).
	\item \textit{Conclusion}, $C$: The ``diverse group'' outperforms the unique best agent at problem-solving, signifying that ``diversity trumps ability".
\end{enumerate}

This argument is logically valid—a tautology, so the proposition $H_1\land H_2\to C$ is certainly true. However, the argument's soundness might be questionable as the hypotheses might not be factual. Thus, it doesn't provide any certainty regarding the conclusion, $C$, i.e., whether diversity indeed outperforms ability. Here, mathematics seems to be used as a tool of persuasion, asserting that it's not ideological, but pure math. However, as we have shown, they are not proving what they claim to be proving.

Finally, note that in Proposition \ref{prop:other results} we explore the full implications of their hypotheses. If the ``random" group is large enough, it always reaches the correct solution. Recall that in any case, the group size must be large, as acknowledged by Hong and Page. Moreover, for any group size, there is always unanimity. If these results are not the case, the hypotheses of the theorem fail and it cannot be applied, which restricts its real world implications and its status as a mathematical truth.

\section{Another misuse of mathematics: Landemore's invalid and unsound epistemic argument}\label{sec:7}
\subsection{The argument}\label{sec:Landemores arg}
The argument, in a nutshell, is, \cite{Lan21}:
\begin{quote}
	Democracy is here modeled as a collective decision-procedure involving the combination of two mechanisms: inclusive and egalitarian deliberation and simple majority rule. The claim is that democracy thus defined is more likely to yield better solutions and
	predictions on political questions than less inclusive and less egalitarian decision-rules because it structurally maximizes the cognitive diversity brought to bear on collective problems. Cognitive diversity—here defined as the fact that people see problems in the world and make predictions based on different models of the way the world works or should be interpreted
	—is a group property that has been shown to be a crucial factor of group performance in various contexts and indeed more important to the problem-solving abilities of a group than individual competence of the members itself (Page 2007). We argue that under the conditions of uncertainty that characterize politics (the fact that the bundle of issues to be faced by any polity over the medium to long term cannot be predicted ahead of time), political decision-making characterized by maximal inclusiveness and equality can
	be expected to be correlated with greater cognitive diversity, which, in turn, is correlated with better problem-solving and prediction. A central assumption of the argument is that politics is characterized by uncertainty. This uncertainty (which is an assumption about the world, not necessarily the subjective epistemic stage of the deliberators) is what renders all-inclusiveness on an equal basis epistemically attractive as a model for collective decision-making. Given this uncertainty egalitarian inclusiveness is adaptive or “ecologically
	rational” (Landemore 2014).
\end{quote}
And the conclusion is:
\begin{quote}
	The argument presented here is based on a simple model of democracy and is entirely deductive. It essentially credits the epistemic superiority of democracy to inclusive deliberation, that is, deliberation involving all the members of the community (whether directly or, where 	unfeasible, through their democratic representatives) [...] The advantage of my deductive epistemic argument, ultimately, is that even if it fails to explain the way actual democracies work, it can serve as a useful normative benchmark to diagnose the way in which existing democracies epistemically dysfunction and imagine alternative institutional arrangements. One implication of the epistemic argument is indeed that in order to obtain the theoretically promised epistemic benefits of democracy, we would need to make the decision-procedures used in actual democracies a lot more inclusive and a lot more egalitarian than they are at present. Institutional reforms that the argument points toward include the replacement of elected representatives with randomly selected ones and a greater use of simple majoritarian decision-making.
\end{quote}
While the argument is not explicitly stated\footnote{Despite claiming that ``The argument presented here is based on a simple model of democracy and is entirely deductive,'' the precise premises, intermediary steps, and conclusions are never explicitly stated. This should be the first step in constructing the argument and possible replies.}, a crucial hypothesis needed for the theorem assumes the following forms:
\begin{itemize}
	\item \textit{Hypothesis}, $H$: Cognitive diversity, defined as individuals seeing problems and making predictions based on different models of the world, is a group property that improves group performance in various contexts.
	\item \textit{Hypothesis'}, $H'$: Greater cognitive diversity within a group correlates with better problem-solving and prediction abilities.
%
\end{itemize}
To justify this, Landemore relies on the results of Hong and Page as described above \cite{Lan14}:
\begin{quote}
	To make that claim, I essentially rely on Hong and Page’s formal results about the centrality of cognitive diversity to the emergent property of
	collective intelligence. 
\end{quote}
We aim to demonstrate that this hypothesis is unjustified, which subsequently renders the argument both logically unsound and inapplicable to real-world scenarios. Additionally, we will highlight instances where she incorrectly deduces propositions from these mathematical theorems, leading to a logically invalid argument.
\begin{remark}
	Some of the critiques presented in the previous section also apply to Landemore. For instance, when she informally discusses the theorem, she falls into the same misrepresentation as Hong and Page, as discussed in Section \ref{sec:answer}. She stated in a public \href{https://youtu.be/HERmRx9wDXc?t=1654}{debate}:
	\begin{quote}
		There are multiple Hong-Page theorems. The one that I use mostly is the ``Diversity Trumps Ability'' theorem. It's basically a formalization of the idea that under certain conditions, you're better off solving problems with a group of average people who think differently than with a group of experts or very smart people.
	\end{quote}
	As we have previously shown in Section \ref{sec:answer}, this is incorrect and misuses the theorem out of its range of applicability. 
\end{remark}
\subsection{Invalid argument: Misstatement of the mathematical theorems} \label{sec:Land errors}
Landemore says (about the Theorem \ref{th:HP PNAS}):
\begin{emphasisquote}
	Let me pause here to emphasize what a remarkably \textbf{counterintuitive}, indeed amazing, result this is. Where the conditions apply, you are better off with a \textbf{random group of people who think differently than with a bunch of Einsteins}! Who would have thought? In my view, this result should truly change our perspective on what makes groups smart to begin with; I believe it has huge implications for the way we should think about political bodies making
	decisions on our behalf.
\end{emphasisquote}
Also \cite{Lan14},
\begin{quote}
	That theorem was sufficiently counterintuitive that they provided a computational example to provide intuition.
\end{quote}
This misstatement is a confusion of the conclusions of the theorem with its hypotheses. The fact that a ``bunch of Einsteins'' is equivalent to only one Einstein (who, by hypothesis, cannot always solve the problem) is not a conclusion; it's an unmentioned  assumption.  More precisely, the hypotheses stipulate that the ``random" or ``diverse" group always reaches the optimal solution, see Corollary \ref{cor:diverse group}. Moreover, by assumption, a group of Einsteins is considered equivalent to one Einstein (Section \ref{sec:clone performance}). Yet again by assumption, it's not always guaranteed that this group or an individual Einstein reaches the optimal solution (Assumption \ref{as:difficult}). So it is not counterintuitive or surprising, it is merely a reiteration of the assumptions, never fully disclosed. Indeed, in, for instance, \cite{Lan12} and \cite{BL21}, it is not mentioned that clones working together are \textit{presumed} to perform just like a single person working alone (refer to Section \ref{sec:clone performance}), or that the best agent ("Einstein") is postulated to be unique (see Section \ref{sec:unique best}). Further details will be elaborated in the next section. Furthermore, by Proposition \ref{prop:other results}, the random group is likely to include a collection of Einsteins. Thus, the basic structure of the argument is, Section \ref{sec:obscuring}:

\begin{enumerate}
	\item \textit{Hypothesis 1}, $H_1$: Group $G_R$ always reach the optimal solution. $G_R$ can include a collection of Einsteins.
	\item \textit{Hypothesis 2}, $H_2$: A collection of Einsteins is not perfect.
	\item \textit{Conclusion}, $C$: Group $G_R$ is ``better" than a collection of Einsteins.
\end{enumerate}

Thus, the ``under the right conditions" of Landemore is, basically, presupposing the conclusion. Once the probabilistic component, which might be obscuring for non-mathematicians but standard for most mathematicians, is removed, the theorem is a simple reiteration of the hypothesis (see Section \ref{sec:obscuring}).
\begin{remark}
	This misstatement appears to have significant implications on Landemore's thought (from the same debate as before):
	\begin{quote}
The theorem's conclusions are not intuitive at all. I think they run against an entrenched belief that experts know best [...]. What this theorem unveiled for me is the possibility that when it comes to collective intelligence, we should stop thinking of it in terms of an addition of individual intelligences. It's really more about the group property. Does it contain enough diversity that we're going to push each other closer again to this global optimum? And that, I think, is not trivial at all. For me, it was a paradigm shift.
	\end{quote}
	Moreover, it leads her to compare the Hong-Page theorem, which is trivial (Section \ref{sec:obscuring}), with a genuinely profound and counterintuitive theorem, such as Arrow's impossibility theorem. However, she believes the difference in treatment between the theorems is based on the difference in their conclusions:
	\begin{quote}
For me, these results are remarkable. In fact, it's interesting to see that other theorems, like the Arrow's Impossibility Theorem, which leads to very negative conclusions about democracy, are considered brilliant and worth a Nobel Prize. It always seems that things are not considered surprising and trivial if they go in one particular direction.
	\end{quote}
\end{remark}

She states (see the full quote below):
\begin{quote}
This assumption ensures that the randomly picked collection of problem solvers in the larger pool is diverse—­and in particular, more cognitively diverse than a collection of the best of the larger pool, which would not necessarily be the case for too small a pool relative to the size of the subset of randomly chosen problem solvers or for too small a subset of problem solvers in absolute terms.
\end{quote}
She seems to allude to the necessity of both \(N\) (size of the larger pool) and \(N_1\) (size of the subset of ``random'' problem solvers) being sufficiently large for Theorem \ref{th:HP PNAS} to hold. However, it appears she might be misinterpreting the roles of \(N\) and \(N_1\) in the proof. While a large \(N_1\) is essential to achieve ``diversity'' (as detailed in Section \ref{sec:clone selection} or Appendix \ref{ap:simpler proof}), increasing \(N\) is not for fostering diversity. Instead, it is to ensure that the best problem solvers are identical copies of the same agent (as discussed in Section \ref{sec:clone selection}, see also Section \ref{sec:rem book hypotheses}).

Despite this, Landemore, after stating the Theorem \ref{th:HP PNAS}, says \cite{Lan21}
\begin{emphasisquote}
	To the extent that \textbf{cognitive diversity} is a key ingredient of collective intelligence, and specifically one that \textbf{matters more than average individual ability}, the more inclusive the deliberation process is, the smarter the solutions resulting from it should be, overall.
\end{emphasisquote}
As we saw in Section \ref{sec:ability trumps div}, this cannot be correct. The theorem presupposes ability in the sense that every problem solver in every state improves the state to a new state closer to the global optimum. Furthermore, as shown by the more realistic Theorem \ref{th:ability main}, if we create two groups from the group in the original theorem – one in which we make the minimal reduction in ability while ensuring full diversity, and another in which we significantly reduce diversity while ensuring ability – the less diverse group would systematically outperform the fully diverse group. In other words, ability trumps diversity.
 
There are also other severe mathematical errors with the ``Diversity Prediction Theorem", see Appendix \ref{sec:5}. Landemore says \cite{Lan21}:
\begin{quote}
	In other words, when it comes to predicting outcomes, cognitive differences among voters matter just as much as individual ability. Increasing prediction diversity by one unit results in
	the same reduction in collective error as does increasing average ability by one unit.
\end{quote}
This is mathematically incorrect and entirely wrong: the effect is undetermined, it's not of the same magnitude, and it's not necessarily a reduction, as explained in Section \ref{sec:asymmetric}, see Error \ref{obs:trivial obs}, and Section \ref{sec:Pages error}. It is a significant mathematical error to assume that the terms in Theorem \ref{th:div pred} can be changed while the other remain fixed. Furthermore, as we observe in Proposition \ref{prop:SEmax}, the diversity and ability terms do not play the same role. While increasing ability eventually reduces the prediction error, increasing diversity does not have the same effect and, furthermore, it eventually increases the maximum prediction error. Therefore, Landemore's argument has a significant gap; without controlling for ability, increasing diversity does not guarantee a reduction in the prediction error.
\subsection{Invalid argument: Page and Landemore's misstatement of the theorem's hypotheses} \label{sec:rem book hypotheses}
To justify the use of the theorem, Landemore states \cite{Lan12},
\begin{emphasisquote}
	Importantly, the four conditions for this theorem to apply \textbf{are not utterly demanding}. The first one simply requires that the problem be difficult enough, since we do not need a group to solve easy problems. The second condition requires that all problem solvers are relatively smart (or not too dumb). In other words, the members of the group must have local optima that are not too low; otherwise the group would get stuck far from the global optimum. The third condition simply assumes a diversity of local optima such that the intersection of the problem solvers’ local optima contains only the global optimum. In other words, the participants think very differently, even though the best solution must be obvious to all of them when they are made to think of it. Finally, the fourth condition requires that the initial population from which the problem solvers are picked must be large and the collection of problem solvers working together must contain more than a handful of problem solvers. This assumption ensures that the randomly picked collection of problem solvers in the larger pool is diverse—­and in particular, more cognitively diverse than a collection of the best of the larger pool, which would not necessarily be the case for too small a pool relative to the size of the subset of randomly chosen problem solvers or for too small a subset of problem solvers in absolute terms.
\end{emphasisquote}
	However, we need more hypotheses than the one mentioned by Landemore. As Landemore acknowledges, she is following Page's book. But he does not mention important hypotheses of his result. In particular, Page fails to mention Assumption
	\ref{as:injective}, \ref{as:ability},
	\ref{as:idempot},
	\ref{as:unique best agent},
	\ref{as:group del},
	\ref{as:clones}, and \ref{as:random}. As we saw in Sections \ref{sec:V inj}, \ref{sec:unique best}, \ref{sec:clone performance}, and \ref{sec:clone selection}, if these conditions do not hold, the theorem doesn't hold (see the counterexamples). And, as we've seen, these conditions can be rather restrictive (such as assuming that a billion Einsteins would not outperform a single Einstein). However, in his book or in a shorter self-contained article, see \cite{Pag07}, his exposition only mentions four conditions:
	\begin{itemize}
		\item \textbf{Condition 1 : The Problem Is Difficult:} No individual problem solver always locates the best solution.
		\item \textbf{Condition 2 : The Calculus Condition:} The local optima of every problem solver can be written down in a
		list. In other words, all problem solvers are smart.
		\item \textbf{Condition 3: The Diversity Condition:} Any solution other than the global optimum is not a local optimum for
		some non-zero percentage of problem solvers.
		\item \textbf{Condition 4: Reasonably Sized Teams Drawn from Lots of Potential Problem Solvers:} The initial
		population of problem solvers must be large, and the teams of problem solvers working together must consist
		of more than a handful of problem solvers.
	\end{itemize}
	And Page concludes as \cite{Pag07}:
	\begin{quote}
		\textbf{The Diversity Trumps Ability Theorem:} Given conditions l to 4, a randomly selected collection of problem solvers almost always outperforms a collection of the best individual problem solvers.
		
		This theorem is no mere metaphor, cute empirical anecdote, or small-sample empirical effect that may or may not be true with more trials. It's a logical truth like the Pythagorean Theorem
		(Hong \& Page, 2004) [i.e., \cite{HP04}]. 
	\end{quote}
	
	Moreover, his Condition 2 (Landemore's second condition; see also \cite{Lan14}) is ill-stated. The ``Calculus Condition'' requires that $\phi(X)$ is countable (which is trivial if $X$ is finite), but he interprets it as ``all problem solvers are smart.'' This condition doesn't relate to being smart, contrary to Page's and, consequently, Landemore's interpretation. For instance, consider the function $\phi:X\to X$ defined as $\phi(x)=x_m$, where $V(x_m)$ is a global minimum of $V$ (e.g., 0). Then, $\phi(X)={x_m}$, finite, which hardly represents being ``smart.'' In fact, it's the worst agent conceivable, the less smart, since it assigns the solution furthest from the global optimum to every state. Nevertheless, Page (and subsequently Landemore) refers to this as being ``smart.''
	
	It's also noteworthy that in his book, Page's conditions, such as Condition 3 above, are subject to Thompson's critique (see Section \ref{sec:V inj}). Indeed, in his response to Thompson, Page does not refer to his stated Condition 3, but to what he incorrectly thinks Condition 3 requires (which is the same mistake present in \cite{HP04} and pointed out by Thompson). He says in \cite{TP15}:
	\begin{quote}
		The article “Does diversity trump ability?” (Notices, October 2014) characterized that claim as false. That characterization was based on an erroneous counterexample that violates my theorem’s Condition 3, (specified in my book): for any nonglobal optimum, some positive proportion of the problem solvers can locate a solution of higher value.
	\end{quote}
	But that quote does not appear in his book, and when he defines Condition 3 in his book, he says what we quoted above, which is subject to Thompson's critique. Furthermore, as we have seen, relying on the book for a fair statement of the theorem hypothesis is far from appropriate as many of them are missing.
	
	Finally, similarly as Landemore, his justification of Condition 4 misrepresents the nature of the theorem. He states:
	\begin{quote}
		If the initial set consists of only 15 problem solvers, then the best ten should outperform a random ten. With so few problem solvers, the best ten cannot help but be diverse and therefore have different local optima. At the same time, the teams that work together must be large enough that the random collection can be sufficiently diverse. Think of it this way: We need to be selecting people from a big pool [$N$ large], and we need to be constructing teams that are big enough [$N_1$ large] for diversity to come into play.
	\end{quote} 
	As we saw in Section \ref{sec:clone selection}, the agents, an $N_1$ number of them, are not taken from the same larger pool of $N$ elements, but from two different independent events. In fact, there is no larger pool with a finite size, but we take elements from $\Phi$, and if an element already appeared, it can appear again, i.e., there is a possible infinite amount of clones, Assumption \ref{as:clones}. 
	
	It is not a proper pool (with a finite size) because we can take an infinite number of elements from it. Let us call it the infinite pool of clones. As detailed in that section, the actual process is the following. First, we form the ``random" group taking elements from this infinite pool of clones until we get the desired elements (enough to outperform the best agent). Note that the number of elements of this group, $N_1$, is not fixed; it is not always 10 or 15, but it can be as large as needed depending on the situation. That is, this number is random and defined in such a way that cancels the randomness previously introduced: it is defined so that we always have a ``random group" containing the elements we want. Once this group is formed, then, from an independent event, we make a large set of clones, of size $N$ such that there are $N_1$ clones of the best agent. Again, $N$ is not a fixed number, but as large as needed until the desired conclusion is reached. This has nothing to do with choosing from a given pool with finite elements, as Landemore and Page seem to suggest. What is more, it is not just that the description is not accurate, but as we showed in Section \ref{sec:clone selection}, doing something similar to what they seem to suggest would make the theorem false. We cannot select the agents in a normal way, but we need an infinite amount of clones of each agent to be able to stop whenever we have reached the desired elements. That is the reason why we argued that they introduced the probability apparatus unnecessarily, Section \ref{sec:obscuring}, which can be simplified by removing the probability part. The simplification reveals that the theorem is merely a restatement of the hypotheses, see also Section \ref{sec:simpler proof}.
\subsection{Unsound argument: The misuse of hypotheses of the ``Diversity Trumps Ability Theorem"}\label{sec:Land hyp}
As we have seen in the previous section, with respect to the four conditions stated by Landemore and Page, those are not the only conditions required for the theorem to apply. Among others, she doesn't mention the hypotheses from Sections \ref{sec:V inj}, \ref{sec:unique best}, \ref{sec:clone performance}, and \ref{sec:clone selection}. If these conditions do not hold, the theorem doesn't hold (see the counterexamples). And, as we've seen, these conditions can be rather restrictive, such as assuming that a billion Einsteins will not outperform a single Einstein. Therefore, her statement of the theorem is incorrect, and the hypotheses of the theorem are more restrictive, which makes her subsequent analysis incorrect. Furthermore, as discussed in Section \ref{sec:Land errors}, Landemore incorrectly deduces some conclusions from the theorems that, although favorable to her position, are mathematically mistaken.

For the sake of dialectical purposes, we are going to omit those errors (although they are enough to invalidate the argument) and add another layer to the critical argument. Even if the theorem were not ill-stated, we have other problems: the hypotheses are not plausible, and they lead to a logical contradiction. Indeed, Landemore claims that the conditions of the theorem (Theorem \ref{th:HP PNAS}) are generally satisfied, as seen in the previous quote or the section ``The meaning and empirical plausibility of the assumptions behind the Diversity Trumps Ability Theorem" in \cite{Lan14}. However, other sets of hypotheses are plausible, or even more so, which is problematic. More specifically, in a Moorean style, we are going to construct a set of incompatible propositions, requiring us to reject (at least) the least plausible one. We are going to use Theorem \ref{th:ability main} for this task.

The propositions are as follows:
\begin{enumerate}
	\item Hong-Page's framework can be used for a deductive argument for epistemic collective-decision systems in the sense that it can serve as a benchmark or be useful in deriving implications to obtain epistemic benefits (as in Section \ref{sec:Landemores arg}).
	\item The assumptions of Theorem \ref{th:basic HP}.
	\item The assumptions of Theorem \ref{th:ability main}.
\end{enumerate}
Note that (1) and (2) imply that ``diversity trumps ability," but (1) and (3) imply ``ability trumps diversity", so at least one of these propositions must be rejected, but:
\begin{enumerate}
	\item Rejecting the first would invalidate Landemore's argument, as the theorem would then have no relevance to collective-decision systems.
	\item Rejecting the second would undermine Landemore's proposition that ``cognitive diversity is a key ingredient of collective intelligence, and specifically one that matters more than average individual ability".
	\item Rejecting the third, without rejecting (2), amounts to ``biting the bullet". The assumptions of Theorem \ref{th:ability main} are relatively more plausible than those in (2). For instance, there is no need to assume the existence of clones, that 100 Einsteins working together to solve a problem are the same as one, or that there will be no disagreement. Furthermore, unlike Hong and Page's Theorem, it provides a fair comparison between ability and diversity. See Section \ref{sec:obscuring} and references therein for more details.
\end{enumerate}
Note that we are not claiming that Theorem \ref{th:ability main} has any social content (we personally reject the first proposition), but it suffices to form the Moorean set of incompatible propositions. It also serves to show how Landemore commits an equivocation fallacy or misuses natural language to represent mathematical statements. For instance, Assumption \ref{as:ability} is read as agents are ``relatively smart (or not too dumb)" and then she discusses that voters satisfy this, \cite{Lan14}. But note that the hypotheses of Section \ref{sec:div group} only reduce by an almost insignificant amount the ability, so they can be considered ``relatively smart (or not too dumb)". But the thesis changes radically. Thus, Landemore's justification for the use of the theorem is seriously compromised.

Finally, even assuming that the hypotheses are plausible, there exists a significant contradiction in Landemore's work. Recall that the hypothesis of Theorem \ref{th:HP PNAS} not only guarantees that the ``random" group is better than the best agent, but also ensures that they always reach the correct conclusion without disagreement or dissent, as shown in Proposition \ref{prop:other results}, see also Remark \ref{rem:unanimity}. These are the same hypotheses that, according to Landemore, make cognitive diversity more crucial than individual ability. This perfect deliberation is what enables the ``diverse" group to surpass the best agent; the former is perfect by assumption, while the latter is not. Nonetheless, Landemore states in \cite{Lan21}:

\begin{quote}
	Deliberation is far from being a perfect or complete decision-mechanism, in part because it is time-consuming and rarely produces unanimity.
\end{quote}

And in \cite{Lan14}, she further notes:

\begin{quote}
	I thus do not need to assume away, as Quirk seems to accuse me of doing, the possibility of disagreement.
\end{quote}

Therefore, if she rejects an implication of the theorem, she must also reject at least one of its hypotheses. However, as we have seen, she defends the hypotheses of the theorem she applies. This creates a contradiction that is never addressed.

\subsection{The substantive content of the Numbers Trump Ability ``Theorem"}\label{sec:vacuousness}
Landemore's key innovation is the following, as stated in \cite{Lan14}:
\begin{quote}
	The second step of my argument—my addendum to Page and Hong—
	proposes that the “cheapest” (i.e., easiest and most economical) way to
	achieve cognitive diversity in the absence of knowledge about the nature of
	complex and ever-changing political problems is to include everyone in
	the group. [...] This “Numbers
	Trump Ability Theorem” thus supports a strong epistemic case for
	democracy, in which my key innovation is to support inclusiveness for its
	instrumental, specifically epistemic properties: Under the right conditions,
	including everyone in the decision-making process simply makes the group
	more likely to get the right (or, at least better) answers.
\end{quote}
Her argument is straightforward: if, for epistemic reasons, diversity is what matters, then including everyone is the simplest way to increase diversity. Aside from practical issues, which Landemore somewhat considers, the problem with this reasoning (which is not actually a theorem) is that the premise is false. We have argued that in both Hong-Page theorems, ability plays a crucial role, as seen in Section \ref{sec:ability trumps div} and \ref{sec:asymmetric}. Thus, increasing the number of people can have detrimental effects. Therefore, the ``theorem" is false. However, by maintaining the Hong-Page framework and considering the observations above, it can be ``corrected" as:

\begin{theorem}[Enlightened Numbers Trump Numbers ``Theorem"]
	Under the right conditions and given the uncertainty in the ability of the agents, including everyone with ability above a certain threshold in the decision-making process makes the group more likely to arrive at the correct (or, at least, better) answers than merely including people without controlling for ability.
\end{theorem}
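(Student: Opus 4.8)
The plan is to cast the two competing procedures inside the deliberation model of Assumption \ref{as:group del} and to exploit the fact, already isolated in Theorem \ref{th:ability main}, that ability rather than sheer diversity is what drives a group toward $x^*$. Write $a(\phi) := \mathbb{E}_\nu(V\circ\phi)$ for the true ability of an agent and model the ``uncertainty in ability'' by assuming that $a(\phi)$ is not directly observable: below a cutoff $\tau$ an agent need no longer satisfy Assumption \ref{as:ability}, so that it may strictly degrade the current candidate, $V(\phi(x)) < V(x)$ for some $x$, whereas above $\tau$ the ability property is retained. The threshold procedure forms $\Phi_{\ge\tau} := \{\phi : a(\phi)\ge\tau\}$; the blanket procedure (Landemore's ``include everyone'') forms the whole pool $\Phi_{\mathrm{all}}$ irrespective of ability. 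I would then compare the group performances $\mathbb{E}_\nu(V\circ\phi^{\Phi_{\ge\tau}})$ and $\mathbb{E}_\nu(V\circ\phi^{\Phi_{\mathrm{all}}})$ in the sense of \eqref{eq:expected value}.

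First I would establish a monotonicity lemma: when every member of a group satisfies Assumption \ref{as:ability}, the series deliberation of Assumption \ref{as:group del} never decreases value, so enlarging such a group can only help, exactly as in Theorem \ref{th:basic HP} and Corollary \ref{cor:diverse group}; consequently, if $\Phi_{\ge\tau}$ retains enough ``unstuck'' agents (Assumption \ref{as:diversity})—one of the ``right conditions''—it reaches $x^*$ from every starting state and attains the maximal performance $1$. Second, I would show that a single degrading agent admitted into $\Phi_{\mathrm{all}}$ can create a common fixed point strictly below $x^*$, so that the deliberation settles at a state $x$ with $V(x)<1$ on a set of initial conditions of positive $\nu$-measure; this is the precise point at which ``numbers'' without an ability filter becomes harmful. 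Combining the two steps yields $\mathbb{E}_\nu(V\circ\phi^{\Phi_{\ge\tau}}) = 1 > \mathbb{E}_\nu(V\circ\phi^{\Phi_{\mathrm{all}}})$, which is the assertion.

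The main obstacle I anticipate is the second step, because the deliberation dynamics of Assumption \ref{as:group del} are no longer monotone once Assumption \ref{as:ability} is dropped: with a degrading agent present the sequence $x_0,x_1,\ldots$ need not be value-increasing, the stopping rule (a state that is simultaneously a fixed point of every member) becomes order-dependent, and one must argue that such a suboptimal unanimous fixed point actually arises rather than being bypassed. I would handle this by a coupling argument—running the two deliberations from the same initial state and the same admissible order of high-ability agents, then inserting the low-ability agent and tracking the induced discrepancy—supplemented by an explicit small example in the spirit of the table in Section \ref{sec:V inj} exhibiting a degrading agent that traps the blanket group below $x^*$. The remaining, genuinely probabilistic, ingredient is to translate ``given the uncertainty'' into a statement about the estimated rather than the true ability: choosing $\tau$ so that, with high probability, every retained agent truly satisfies Assumption \ref{as:ability} while the blanket pool contains a degrading agent with positive probability. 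This is exactly what makes the threshold group \emph{more likely} to reach the correct answer, and it mirrors the asymmetric role of ability already exposed by Theorem \ref{th:ability main}.
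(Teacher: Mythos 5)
First, a point of context: the paper never actually proves this statement. It is a scare-quoted ``Theorem'' offered as an informal correction of Landemore's equally informal ``Numbers Trump Ability Theorem'', and the only support the paper supplies is a pointer back to Theorem \ref{th:ability main} and the asymmetry analysis of Appendix \ref{sec:5}. So you are attempting something the paper does not attempt, which is legitimate; but your attempt has a structural gap in its second step.

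The gap is this: your step 2 claims that a single degrading agent in $\Phi_{\mathrm{all}}$ ``can create a common fixed point strictly below $x^*$'' at which the deliberation settles. Under the unmodified stopping rule of Assumption \ref{as:group del}, that is impossible given your own step 1. You assume (as one of the ``right conditions'') that the filtered group $\Phi_{\ge\tau}$ retains the diversity property of Assumption \ref{as:diversity}; since $\Phi_{\ge\tau}\subset\Phi_{\mathrm{all}}$, every non-optimal state is moved by \emph{some} member of the blanket pool, so no state with $V(x)<1$ is a fixed point of all members of $\Phi_{\mathrm{all}}$, and the deliberation cannot legally terminate there. What actually happens when a degrading agent is present is that the sequence either still reaches $x^*$ or cycles forever (good agent improves, bad agent undoes, good agent improves again), and Assumption \ref{as:group del} assigns no output to a non-terminating run, so $\mathbb{E}_\nu\bigl(V\circ\phi^{\Phi_{\mathrm{all}}}\bigr)$ is not even well defined. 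Your coupling argument tracks a discrepancy but cannot manufacture a unanimous suboptimal fixed point that the hypotheses forbid. The missing idea is precisely the one the paper introduces in Appendix \ref{ap:ability trumps div th} to prove Theorem \ref{th:ability main}: the deliberation protocol must be \emph{changed} so that a disagreement cycle ($x\to x'\to x$ between two distinct agents) is itself a stopping point, and a full-support measure $\mu_x$ over which agent speaks next must be introduced so that such a cycle occurs with positive probability and drags the expected value strictly below $1$. With that modified protocol your two-step plan goes through essentially as in the proof of Theorem \ref{th:ability}; without it, step 2 fails. A secondary, smaller issue: identifying ``ability above $\tau$'' (a condition on $\mathbb{E}_\nu(V\circ\phi)$) with ``satisfies Assumption \ref{as:ability} at every state'' is a stipulation, not a consequence — an agent with high expected performance can still degrade individual states — so the threshold should be defined directly in terms of the pointwise non-degradation property, as the paper's construction of $\Phi^{\textnormal{A}}$ does.
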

If this uncertainty is one of the reasons used to prove Landemore's ``theorem", it can backfire and serve, while maintaining the same framework as Landemore's, as a reason to reach an opposite conclusion to inclusiveness: If we acknowledge the ``absence of knowledge about the nature of complex and ever-changing political problems'', it would be prudent to select problem solvers who are competent enough to handle these uncertain problems. Hence, we must take ability into account. \textit{In other words, once corrected, this theorem lends support to a version of epistocracy}. As we have previously stated, we do not find the Hong-Page theorems particularly enlightening, so we do not advocate for this theorem. Nonetheless, if we follow Landemore's line of reasoning, this interpretation would be more accurate.

In general, all the theorems that Landemore uses for her political defense of democracy (\cite{Lan12}) presuppose certain levels of ability and knowledge. This is the case of the Hong-Page theorems, as seen in Section \ref{sec:ability trumps div} and \ref{sec:asymmetric}, as well as the Condorcet Jury Theorem (CJT) and the Miracle of Aggregation, as shown in \cite[Theorem 3.5 and Theorem 4.3]{Rom22}. These latter two theorems, which belong to the same general theorem (non-homogeneous CJT), are far more likely if we include epistemic weights that are stochastically correlated (with a measurement error) with epistemic rationality. Also, if ability is not controlled, these theorems can operate in the opposite direction, ensuring that we almost surely choose the \textit{wrong} option. Thus, from an epistemic and instrumental perspective, these theorems strongly suggest including ability thresholds or, the more feasible and semiotically problem-free case of epistemic weights with a minimum of 1 (no one excluded) and stochastically correlated (the inevitable measurement error is taken into account, perfect correlation is not assumed) with epistemic rationality, for a starting practical proposal, see \cite[Appendix D]{Rom22} or, for a lengthier discussion\footnote{In Spanish, but see references therein (in English).}, \cite{Rom21}. This could serve as a preliminary proposal that needs to be tested and experimented with. While it might still be far from perfect, it should be evaluated in comparison to the existing alternatives.

Nevertheless, Landemore strongly opposes epistocracies. In her chapter ``Against Epistocracies", \cite{BL21}, she says:

\begin{quote}
	My first question to Brennan is this: What would such
	exclusion achieve? Recall that in my model deliberation
	does most of the epistemic work. Most filtering of bad
	input or bad reasoning occurs at that deliberative stage.
	So there is no reason not to include everyone as one more,
	howsoever uninformed, voice will not pollute the outcome but will at most delay the conclusion of the deliberation.
\end{quote}

This is incorrect. First, if no selection is done, one cannot ensure the conditions of the Hong-Page theorem, so one cannot expect the result (that deliberation works) to hold. Second, as we have seen in Theorem \ref{th:ability main}, introducing these kinds of agents can pollute the outcome, getting stuck at a solution far from the global maximum. This is the same error that pollutes all of Landemore's analysis, so we insist here; all these theorems assume a certain amount of ability, but Landemore just presupposes\footnote{For instance, \begin{quote}
		Assuming that, on average, the citizens from among
		whom we select representatives meet a minimal thresh-
		old of individual competence, random selection is a more promising, authentically democratic way of selecting representatives that maximizes cognitive diversity in the 
		face of political uncertainty.
\end{quote} See Section \ref{sec:Land hyp} for more.} this without questioning seriously enough and  focusing mainly on diversity, which has a ``secondary" effect (Theorem \ref{th:ability main} and Section \ref{sec:asymmetric}).

For instance, she continually emphasizes the uncertainty:

\begin{quote}
	As time goes by and circumstances change, however, it becomes very likely that his epistocracy will run into issues where it will miss the very voices and votes it purposely excluded. Even if the probability is low, the expected cost might still be huge. Why
	take the risk?
	There may be a short window of time in which a
	Brennanist epistocracy would work, perhaps even better
	than a democracy. But probabilistically, this superiority
	is bound to vanish over time. The question is when.
	[...]
	Most importantly, there is no reason
	to exclude any voice in a model that assumes democratic
	deliberation itself can weed out the bad input.
\end{quote}

However, this same uncertainty, when translated into uncertain abilities of the problem solvers, could lead to the inclusion of some problem solvers who, rather than aiding, actually obstruct us from reaching the optimal solution. Thus, her claims like ``But probabilistically, this superiority is bound to vanish over time" and that the expected cost is substantial are unfounded, and she provides no valid proof for such strong propositions. It's important to note that there may be merit in including all voices in some capacity. The purpose of this part is not to criticize that, but to critically analyze her use of mathematical results to draw certain conclusions. Therefore, it is not reasonable—and borders on begging the question—to assume that democratic deliberation itself can weed out bad input without any further justification for this strong result.
\section{Conclusion} \label{sec:8}
Our thorough analysis of the Hong-Page Theorems has revealed several critical issues. These include the portrayal of the theorem as an unnecessary mathematical formalization of a simple fact, effectively a mere reiteration of the hypothesis, overlooking certain assumptions, and the application of mathematical principles in ways that may not fully align with their desirable use in sociopolitical contexts. We have concluded that what the theorem requires is not ``diversity", but the existence of a more ``able" problem solver who can improve upon areas where others fall short.

The application of mathematics by Hong and Page in their theorem presents its findings in a manner that may hide the straightforward nature of its conclusions. Through the use of mathematical formalism, aspects that might be considered basic are portrayed with a level of complexity that suggests deeper insights, potentially leading to a misunderstanding of the theorem's implications. It is crucial to approach the application of mathematics with careful consideration and precision, particularly when it informs the basis for decisions with significant societal impact. Despite its wide recognition and numerous citations, the Hong-Page theorem, as well as the related ``Diversity Prediction Theorem," should be critically evaluated for their contributions to the understanding of collective decision-making. This careful scrutiny is also applicable to Page's book \cite{Pag08}, where the foundational claims merit a reexamination in light of these considerations. From\footnote{Quotes are included here to demonstrate how the main contributions, according to the authors (hence the use of quotes), of their respective books are significantly impacted by the analysis presented in this paper, essentially indicating that their validity is considerably compromised.} the preface:

\begin{quote}
Perhaps because \textit{The Difference} takes time to digest, eventually, accurate readings won out. Reviewers recognized that\textit{ The Difference} explores the pragmatic, bottom-line contributions of diversity. It does so using models and logic, not metaphor. The book’s claims that “collective ability equals individual ability plus
diversity” and that “diversity trumps ability” are mathematical
truths, not feel-good mantras.
\end{quote}

Hélène Landemore's utilization of mathematical reasoning within her political proposal and epistemic argument encounters similar challenges regarding logical validity and soundness. Her ``Numbers Trump Ability" theorem, inspired by the Hong-Page Theorems, along with other derived conclusions such as her interpretation of the ``Diversity Prediction Theorem", face critical scrutiny. Our analysis suggests that her epistemic argument encounters foundational issues. Thus, the core argument of her book \cite{Lan12} faces significant challenges. As stated by Landemore in \cite{Lan14}:

\begin{quote}
	Let me briefly rehearse what I see as the
	main argument of the book. At its heart is a simple model of what, under
	certain conditions that I deem plausible enough, can be expected of an
	inclusive political decision process in a comparison with less inclusive
	ones. [...] In my eyes, the main value of my book is to create a simplified,
	relatively rigorous framework for the meaningful comparison of the
	properties of basic political “regimes.”
\end{quote}
A similar problem arises in other foundational claims related to her political proposal of ``Open Democracy'', found in works such as \cite{Lan12}, \cite{Lan14}, \cite{Lan21}, and \cite{BL21}.

This critique should not be taken as a dismissal of the importance of diversity in decision-making, but rather as a call to address the misuse of mathematics in these contexts. It urges us to consider the rigorous and nuanced approach required when applying mathematical theories to sociopolitical constructs. As such, this paper makes a contribution to the ongoing discourse on collective intelligence, fostering a deeper understanding of uses and misuses of mathematical formalization in social theory.

\newpage
\appendix
\section{Simpler proof of Theorem \textnormal{\ref{th:HP PNAS}} and discussion of the role of $N_1$} \label{ap:simpler proof}
We collect some technical issues related to Theorem \ref{th:HP PNAS} in this appendix.
\subsection{Simpler proof}
\begin{proof}[Proof of Theorem \textnormal{\ref{th:HP PNAS}}]
	First, by hypothesis (Assumptions \ref{as:difficult} and \ref{as:diversity}), $\exists~x_*\in X,~\phi^*,\phi_*\in\Phi$ such that the best agent $\phi^*(x_*)\neq x^*$ and $V\left(\phi_*\left(\phi^*(x_*)\right)\right)>V\left(\phi^*(x_*)\right)$. By hypothesis (Assumptions \ref{as:group del} and \ref{as:clones}), $V\circ\phi^{\{\phi^*,~\phi_*\}}\ge V\circ\phi^*$, where the equality is strict for at least one point. Given that $\nu$ has full-support, $\mathbb{E}_\nu\left( V\circ\phi^{\{\phi^*,~\phi_*\}}\right)> \mathbb{E}_\nu\left(V\circ\phi^*\right).$
	
	Second, we introduce the probabilistic selection of clones. By the Strong Law of Large Numbers (SLLN),
	$$
	\mu\left(\omega\in\Omega~:~\bigcap_{\phi}\left(\lim_{N\to\infty}f^N(\phi)=\mu\left(\{\phi\}\right)\right)\right)=1\,,
	$$
	where $f^N\left(\phi\right)$ represents the frequency of appearance of $\phi$ when the size of the group of clones is $N$. The intersection is finite. For this full-measure set, we define $N_\phi=N_\phi(\omega)$ as the integer such that, if  $N\ge N_\phi$, then $f^N\left(\phi\right)>\mu\left(\{\phi\}\right)/2.$ Following Assumption \ref{as:random}, we take $N_1\coloneqq\max\{N_{\phi^*},N_{\phi_*}, \frac2{\mu\left(\{\phi^*\}\right)}, \frac2{\mu\left(\{\phi_*\}\right)}\}$ for the first event. By these definitions, at least one copy of $\phi_*$ and $\phi^*$ are included. For the second event, take $N\ge \frac2{\mu\left(\{\phi^*\}\right)}N_1$, so there are more than $N_1$ copies of $\phi^*$ in the second group. The proof then follows from the first part of this argument.
\end{proof}
See the main text, Section \ref{sec:simpler proof}, for an intuitive explanation of this proof.
\subsection{The role of $N_1$}\label{rem:N1 large}
	Noting that if $X$ is finite, then $\Phi$ must also be finite, we can choose $N_1$ such that almost surely every member appears in the random group. We just need to set:
\begin{equation}\label{eq:def gen N1}
		N_1\coloneqq\max_{\phi\in\Phi}\lbrace N_{\phi}, \frac2{\mu\left(\{\phi\}\right)}\rbrace\,.
\end{equation}
	In the original proof by Hong and Page, $N_1$ is set so that every member needed to outperform the best performing agent appears with probability one, cf. Remark \ref{rem:Phi0}. Thus, $N_1$ must be large, which virtually\footnote{In the extreme case where $\phi_0$ is not needed at all and $\mu\left(\lbrace\phi_0\rbrace\right)$ is close to zero, we cannot ensure that, even for a large $N_1$, one copy is included almost surely.} guarantees that at least one copy of each member of $\Phi$ is included in the ``random group''. In any case, this $N_1$ as defined is sufficient to ensure the theorem holds true.

	More precisely, we can consider three different $N_1$ so that the following groups are included almost surely:
	\begin{enumerate}
		\item $\Phi$, all agents from the given pool or $\Phi_0\subset \Phi$, a subset of agents that is always sufficient to attain the solution $x^*$,
		\item $\tilde{\Phi}_0\subset \Phi_0$, a group that outperforms the best performing agent, i.e, for a small $\varepsilon_1$, 
		$
		\mathbb{E}\left(V; \tilde{\Phi}_0\right)>1-\varepsilon_1\ge\mathbb{E}\left(V; \{\phi^*\}\right)\,,$
		\item $\{\phi^*,\phi_*\}\subset\Phi$, a pair of agents that is enough to outperform the best-performing agent, $\phi^*$.
	\end{enumerate}
	
	For each subset, you can define, using \eqref{eq:def gen N1}, a $\{N_1^{(i)}\}_{i=1,2,3}$. Each one is sufficient for the theorem to hold.	But, let us consider the following:
	
	\begin{itemize}
		\item If our goal is merely to outperform the best agent, then  $N_1^{(3)}$ from the simpler proof above should suffice and is more straightforward.
		\item Given the same hypotheses, one could simply select $N_1^{(1)}$ and assert a stronger result, as it has been demonstrated in Proposition \ref{prop:other results}. It is not just that they outperform the best agent, but they also almost surely find the optimal solution. So, with the same hypotheses, one can prove a stronger result. Shouldn't one then fully explore the implications of these hypotheses?
	\end{itemize}
	Given that $N_1=N_1(\omega)$ is \textit{not fixed} but rather a ``stopping time" that halts when the desired members, as per the selector's or chooser's preference, are included, one might ask why the selector should stop merely to outperform the best agent if the correct solution can always be reached when the hypotheses are met. This more comprehensive result trivially implies the one in  \cite{HP04}: if there exists an $N_1$ such that the correct solution is always reached, then, by monotonicity, there must exist an $N_1$ such that the other group is outperformed.
	
	Also worth noting is that one cannot exclude the possibility of other members, different from the ones in the selected group, appearing. For instance, although we may choose $\tilde{\Phi}_0$ such that $\phi^*\notin \tilde{\Phi}_0$, the ``random" group might eventually include $\phi^*$ because of the way these members are chosen, as detailed in Section \ref{sec:clone selection}.
	This is why we consistently refer to $\Phi_0$ or $\Phi$, as these capture the full consequences of the hypotheses. In the face of uncertainty, the only certain way to include all necessary agents is to include everyone. 
	
	In any case, the triviality of the theorem is manifest for any $N_1$. Indeed, its basic structure is as follows:
	
\vspace{1.7mm}
\noindent\textsc{Theorem}. \textit{{Hypotheses}}: Let there be a pool of agents such that:
	\begin{itemize}
		\item they never degrade the solution and where there is always an agent who can improve the solution,
		\item the best performing agent is imperfect by hypothesis,
		\item as it follows from these two points above, we can choose agents randomly until they outperform the imperfect best performing agent.
	\end{itemize}
	{\textit{Thesis}}: Consequently, this ``random" group will outperform the best performing agent.
	
\subsubsection{Numerical example} 
For instance, taking the numbers from an example based on Page's slides, see Section \ref{sec:answer}, with 12 agents, in the vast majority of cases, the best agent will be included in the random group using $N_1^{(2)}$. Additionally, the average $N$ will have the same order of magnitude in all three cases—$N_1^{(1)}$ is not huge—with the following inequalities:
$$
17\approx\mathbb{E}\left(N_1^{(3)}\right)<24\approx\mathbb{E}\left({N}_1^{(2)}\right)<36\approx\mathbb{E}\left(N_1^{(1)}\right)\,,
$$
the middle group being sufficient to outperform the best performing agent and consisting in six agents, as shown in Page's slide. For example, if we take $N_1^{(3)}$, the smallest one on average, a more equitable comparison would be as follows, see Figure \ref{fig:pageecb_comparison}. In the same figure we show what happens if one uses ${N}_1^{(2)}$, another group that outperforms the best agent. Also note that in that case, the probability of members other than the ones of this group not appearing is practically zero, so they must be represented in the figure in some way.

\begin{figure}
	\centering
	\begin{subfigure}{0.49\linewidth}
		\centering
		\includegraphics[width=\linewidth]{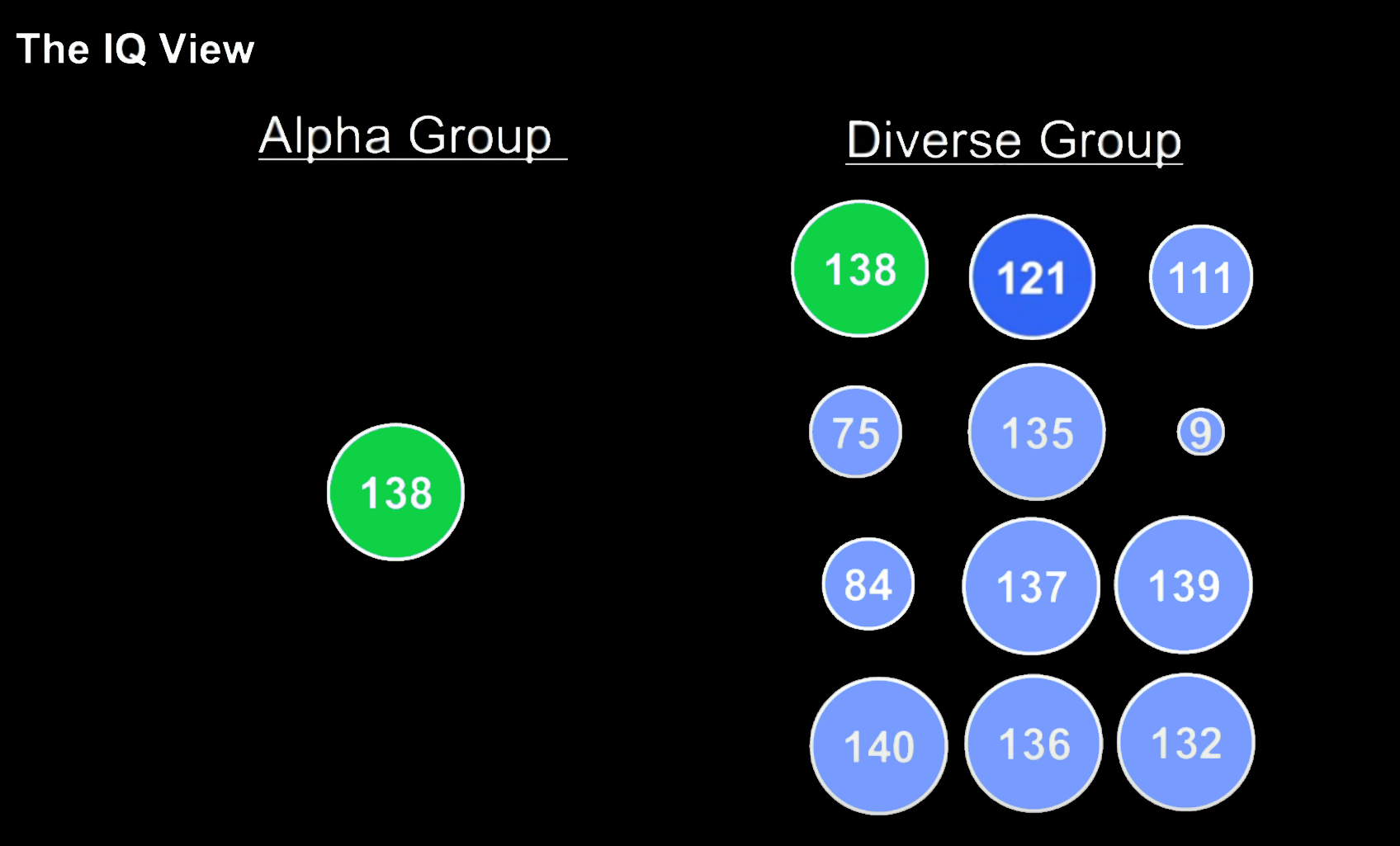}
		\caption{A presentation slide that better encapsulates the essence of the theorem, $N_1=N_1^{(3)}$.}
		\label{fig:pageecb_N2}
	\end{subfigure}
	\hfill
	\begin{subfigure}{0.49\linewidth}
		\centering
		\includegraphics[width=\linewidth]{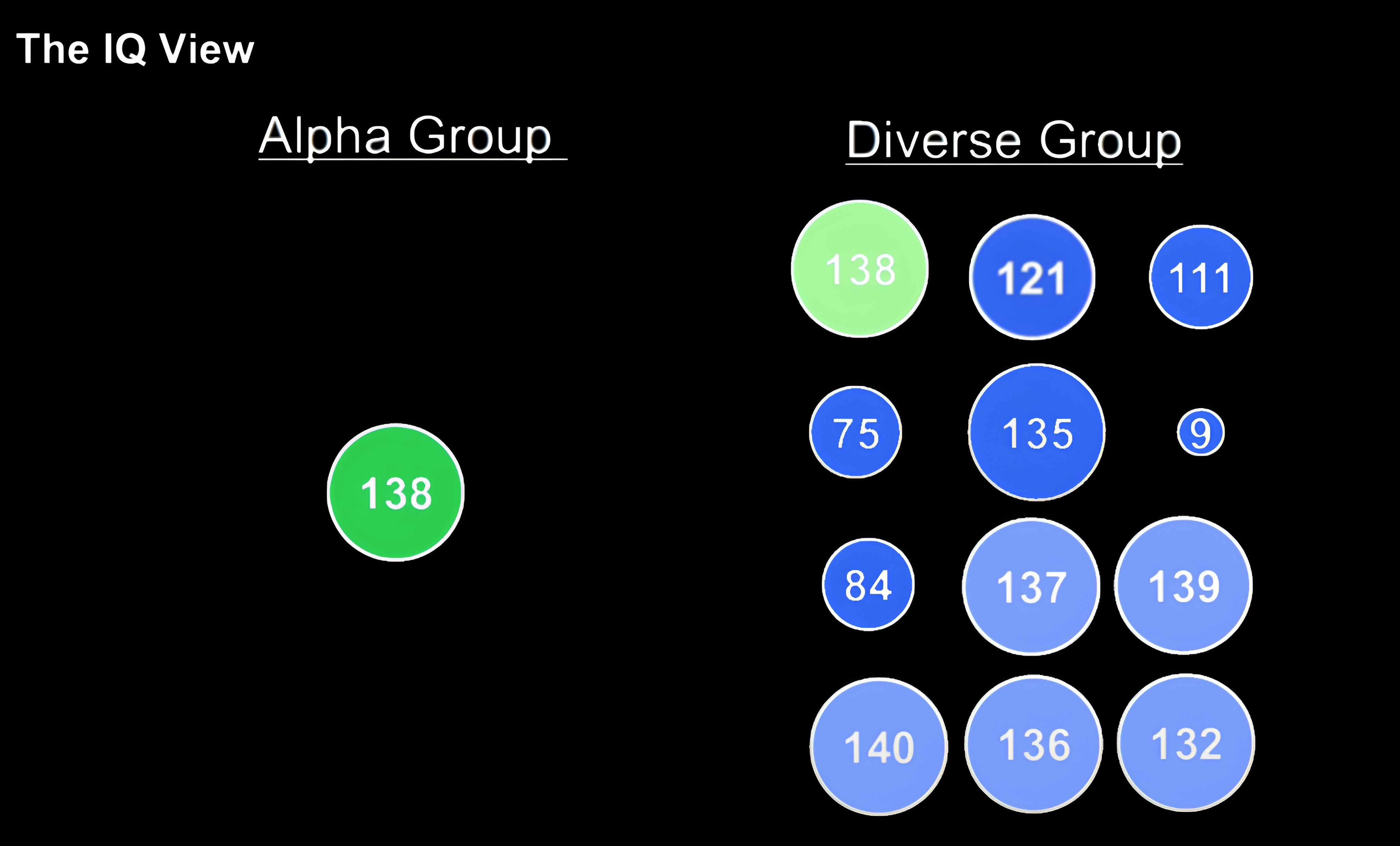}
		\caption{A presentation slide that better encapsulates the essence of the theorem, $N_1=\tilde{N}_1^{(2)}$. }
		\label{fig:pageecb_N3}
	\end{subfigure}
	\caption{Different slides. In green, the best agent; in blue, the other agents. Light colors indicate that the probability of appearing is not one (but could be close to one). In the first figure, ``121" is assumed to be complementary to the best agent; it can improve where the latter falls short. In the second figure, we are considering that the elements in dark blue are sufficient, by hypothesis, to outperform the agent ``138".}
	\label{fig:pageecb_comparison_app}
\end{figure}
	
\section{Ability trumps diversity Theorem}
\label{ap:ability trumps div th}
We give the precise statements and proof of the theorem mentioned in Section \ref{sec:ability trumps div}.
\subsection{The new assumptions}
Among a set of agents $\Phi$, we select two finite groups with different properties. We are going to modify some assumptions, but the other remain the same. First, let us introduce the possibility of disagreement following Assumption \ref{as:group del} as:
\begin{equation*}
	\phi_{i_{j+1}^x}\left(\phi_{i_{j}^x}(x_{j-1})\right)=x_{j-1}\,, ~~\text{ with }~~ \phi_{i_{j}^x}(x_{j-1})\neq x_{j-1}\text{ and }i_{j}^x\neq i_{j+1}^x\,.
\end{equation*}
A disagreement is a stopping point. In other words, if there is a cycle such that one agent proposes a new solution and other reverse back that solution there is a disagreement and that initial solution is given as the group solution. This is a simple model where disagreement is possible. Clones are not needed now, so idempotency is not necessary.
\begin{remark}\label{rem:unanimity}
	Note that in the original formulation of the Hong-Page Theorem, for any group of any size, even if they might not be able to reach the correct solution, Remark \ref{rem:Phi0}, there will be no disagreement in any case, as per Assumption \ref{as:group del}. This always leads to unanimity, which is highly unrealistic.
 \end{remark}

Let also $\mu_x$ be a probability measure such that if $x$ is the previous solution, $\mu_x\left(\{i\}\right)$ represents the probability that $\phi_i(x)$ the next solution in the deliberation chain, see Assumption \ref{as:group del}. This measure has full-support, no one is silenced. The indices are chosen independently. Once fixed the $x_0$, this defines a probability measure $\mathbb{P}$ on the possible paths. That is, 
$$
\mathbb{P}\left(x_{k}=x'\mid x_{k-1}=x\right)=\mu_{x}\left(\{i ~\mid~\phi_{i}\left(x\right)=x'\}\right)>0\,.
$$

\subsubsection{The ability group.} Let us denote a group by $\Phi^\textnormal{A}=\{\phi_\alpha\}_{\alpha\in A}$ which is selected such that:
\begin{itemize}
	\item \textit{Ability}: $V\circ\phi_\alpha\ge V$. In other words, this group is chosen to ensure ability in the sense that each agent does not decrease the value of the initial state given. This is Assumption \ref{as:ability}, but now is imposed by the selection of the group.
	\item \textit{Common knowledge}: $\exists X_{\textnormal{CK}}\subset X$ such that:
	\subitem \textit{Non-diversity set}: $\forall \alpha,\alpha'\in A$ we have $\phi_\alpha\vert_{X_{\textnormal{CK}}}\equiv \phi_{\alpha'}\vert_{X_{\textnormal{CK}}}$.
	\subitem\textit{\textit{Knowledge}}: $\phi_\alpha (x_c)=x^*~~\forall~x_c\in X_{\textnormal{CK}}$ $\forall~\alpha\in A$.	
	
	\noindent In other words, this group selection comes with a selection bias. The agents have a common knowledge that makes them similar; for the set $X_{\textnormal{CK}}$, they all give the same solution. This is an extension of the second part of Assumption \ref{as:ability}; agents are not only able to recognize that $x^*$ is the solution, but they can do the same for other states $x\in X_{\textnormal{CK}}$. Note that $x^*\in X_{\textnormal{CK}}$. 
	\item \textit{Smaller diversity set}: $\forall~x\in X\backslash X_{\textnormal{CK}}$, Assumption \ref{as:diversity} holds. In other words, for this group, the original assumption of Hong and Page only holds outside the ``larger'' set $X_{\textnormal{CK}}$, instead of $x^*$. 
	\item More importantly, we diminish diversity in a second way. For all $x$ in $X\backslash X_{\textnormal{CK}}$, there exists exactly one agent, $\phi^x$, who provides a distinct answer, and the set of unique answers could be equidistributed, meaning it's not just one agent always giving the different answer. Formally, $\{x~|~\phi^x=\phi\}\le \frac{|X|}{|\Phi^\textnormal{A}|}+1$ for all $\phi\in\Phi^\textnormal{A}$. Therefore, if the ratio is small enough, two agents are quite similar, signifying a lack of diversity, i.e., following \cite[Appendix]{HP98}, their distance is relatively small.
\end{itemize}
\begin{remark}
	For the theorem to function, we don't require a large $X_\text{CK}$, but having it large makes the agents less diverse. It could be just $x^*$ as in the original theorem.
\end{remark}
\subsubsection{The diversity group.}\label{sec:div group} Let us denote a group by $\Phi^\textnormal{D}=\{\phi_j\}_{j\in \mathcal{J}}$ which is selected such that the maximum diversity is guaranteed. More precisely, there is a unique $x^0\in X$ such that:
\begin{itemize}
	\item \textit{Full-diversity with ability}: $\forall~x\in X\backslash\{x^0, x^*\}$ there is a set of agents $\{\phi_{j^x_k}\}_{k=1}^{n_x}\subset\Phi^\textnormal{D}$ such that $\phi_{j^x_k}\left(x\right)\neq x$ and such that all the states, and only those, closer to the solution $x^*$ (that is, all improve the state) are the local optimum for some agent.
	\item \textit{Minimal ability loss}: there is only one agent $\phi_{j_0}\in\Phi^\textnormal{D}$ and only one state $x^0$ such that $V\left(\phi_{j_0}(x^0)\right)<V\left(x^0\right)$. Note that this is the \textit{minimal ability that can be lost.}
\end{itemize}

\subsection{The theorem}
\begin{theorem}[Ability Trumps Diversity]\label{th:ability} Let $\Phi^\textnormal{A}$, $\Phi^\textnormal{D}$ as above with the given assumptions. Then, the ability group outperforms the diversity group.
\end{theorem}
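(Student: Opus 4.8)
The plan is to compare the two groups through their \emph{expected performance}, i.e.\ the $V$-value of the state returned by the (now possibly disagreeing) deliberation process of Assumption \ref{as:group del}, averaged over the initial state drawn from $\nu$ and over the random deliberation path governed by $\mathbb{P}$ (built from the $\mu_x$). I would show that the ability group attains the maximal possible performance $1$, while the diversity group falls strictly below it; the conclusion of Theorem \ref{th:ability} is then immediate.

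First I would establish that the ability group $\Phi^\textnormal{A}$ returns $x^*$ along \emph{every} path, so its performance equals $1$. The key observation is that since $V\circ\phi_\alpha\ge V$ for all $\alpha\in A$, Assumption \ref{as:injective} forces every non-trivial step to strictly increase $V$: if $\phi_{i_j}(x_{j-1})=x_j\ne x_{j-1}$ then $V(x_j)>V(x_{j-1})$. Consequently no disagreement (a $2$-cycle $x_{j-1}\to x_j\to x_{j-1}$) can ever arise, because a reversal would require $V(x_{j-1})\ge V(x_j)>V(x_{j-1})$. Thus the path is strictly value-increasing and can halt only at a common fixed point. On $X_{\textnormal{CK}}$ every agent maps to $x^*$, and on $X\setminus X_{\textnormal{CK}}$ the diversity condition guarantees some agent strictly improves the current state, so no non-optimal state is a common fixed point; since $X$ is finite, the unique terminal state is $x^*$. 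Hence the expected performance of $\Phi^\textnormal{A}$ is $1$, and note this holds despite the reduced diversity clause, since the minimal diversity requirement still supplies an improving agent at each non-optimal state.

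Second I would show the diversity group $\Phi^\textnormal{D}$ has performance strictly below $1$ by exhibiting a positive-probability event on which deliberation halts at a suboptimal state. Consider starting at $x^0$, which occurs with probability $\nu(x^0)>0$, and then selecting $\phi_{j_0}$, which occurs with probability $\mu_{x^0}(\{j_0\})>0$ by full support. By the minimal-ability-loss clause, $y:=\phi_{j_0}(x^0)$ satisfies $V(y)<V(x^0)$, so in particular $y\notin\{x^0,x^*\}$. The full-diversity-with-ability clause at $y$ then supplies an agent that sends $y$ back up to $x^0$ (as $x^0$ is a higher-value state ``closer to the solution''); selecting this agent, again with positive $\mu_y$-probability, produces the cycle $x^0\to y\to x^0$, i.e.\ a disagreement, whose stopping rule returns $x^0$. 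Since $x^0\ne x^*$ we have $V(x^0)<1$, so on this event the returned value is strictly below $1$. Averaging, the expected performance of $\Phi^\textnormal{D}$ is $<1$, which equals that of $\Phi^\textnormal{A}$ minus a positive amount.

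The main obstacle is this second claim, specifically justifying that the single engineered ability loss genuinely produces a \emph{terminating} disagreement at a suboptimal value. One must verify from the full-diversity-with-ability clause that some agent in $\Phi^\textnormal{D}\setminus\{\phi_{j_0}\}$ actually returns $y$ to $x^0$ (so that the $2$-cycle, rather than merely a further improvement, occurs), and confirm that the disagreement stopping rule of the modified Assumption \ref{as:group del} outputs the pre-cycle state $x^0$ rather than continuing. Everything else---the monotonicity argument for $\Phi^\textnormal{A}$ and the positive-probability bookkeeping via the full-support measures $\nu$ and $\mu_x$---is routine once this structural point is pinned down.
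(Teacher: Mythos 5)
Your proposal is correct and follows essentially the same route as the paper: show that $\Phi^\textnormal{A}$ attains expected performance $1$ because every step weakly (hence, by Assumption \ref{as:injective}, strictly when nontrivial) increases $V$ and the chain almost surely reaches $x^*$, while $\Phi^\textnormal{D}$ falls strictly below $1$ because the pair $x^0\leftrightarrow\phi_{j_0}(x^0)$ produces a terminating disagreement cycle with positive probability under the full-support measures. The only cosmetic differences are that you make explicit that no disagreement can arise inside $\Phi^\textnormal{A}$ (the paper leaves this implicit) and you use a single cycle case where the paper tracks both, while the paper is slightly more careful about almost-sure termination via the infinite-product (never-stuck) estimate that your first part glosses over.
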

\begin{proof}
	To prove this theorem, we need to compare the  performances of the two groups. First, we consider the ability group $\Phi^\textnormal{A}$. Any agent from $\Phi^\textnormal{A}$ does not decrease the value of the given state. Moreover, for any state in the non-diversity set $X_{\textnormal{CK}}$, all agents in $\Phi^\textnormal{A}$ will return the optimal solution $x^*$. Thus, following the measure $\mu_{x'}$, for any $x\in X$,
	$$
	V(x)\le V\left(\phi_{i_{1}^x}(x)\right)\le\ldots \le V\left(\phi_{i_{n}^x}(x_{n-1})\right)=1\,.
	$$
	This is because $p_{x'}\coloneqq \mu_{x'}^A\left(\alpha\in A~\mid~\phi_\alpha({x'})\neq {x'}\right)>0$. This holds true even in the worst-case scenario where all agents, except one, are stuck at that point. Thus being stuck has probability, by the subadditive property,
	\begin{equation*}
		\prod_{i=1}^\infty(1-p_{x'})=0\,\rightarrow \mathbb{P}\left(\exists~n_0, x' ~:~\phi_{i^x_{n}}(x')=x'~~ \forall~n\ge n_0\right)\le \sum_{n_0}\sum_{x''} \prod_{i=n_0}^\infty(1-p_{x''}) =0\,.
	\end{equation*}
	where the sum in $x'$ is finite. Thus, with probability one, in a finite number of steps we have strict inequalities reaching $x^*$, returning this as the solution. Thus, for all $x\in X$, every path starting at $x$ leads to $x^*$. Thus,
	\begin{equation*}
		\mathbb{E}_{\mu^\textnormal{A},\nu}(V\circ\Phi^\textnormal{A})\coloneqq \sum_{x\in X}\nu(x)\mathbb{E}_{\mu^\textnormal{A}}\left(V\circ\Phi^\textnormal{A}(x))\right)=1\,.
	\end{equation*}
	where $\Phi^\textnormal{A}(x)\coloneqq \phi^{\Phi^\textnormal{A}}.$
	
	Now, consider the diversity group $\Phi^\textnormal{D}$. This group is selected to maximize diversity and only allows minimal ability loss. However, there exists exactly one agent $\phi_{j_0}$ and one state $x^0$ such that $V\left(\phi_{j_0}(x^0)\right)<V\left(x^0\right)$. Let $x^{(-1)}\coloneqq \phi_{j_0}\left(x^0\right)$ and let $x$ such that $V(x)\le V(x')$. Similar as above, by fineness, the probability that
	\begin{equation*}
		\mathbb{P}\left(\exists~ i^x_k~\mid~\phi_{i^x_k}(x_{k-1})=x'\right)>0\,.
	\end{equation*}
	We have two possibilities:
	\begin{itemize}
		\item If $x_{k-1}=x^{(-1)}$ and $V(x)\le V(x^{(-1)})$, then, a ``disagreement cycle'' can be completed, $x_{k-1}=x^{(-1)}\to x^0\to x^{(-1)}$, returning $x^{(-1)}$. This happens with probability
		$$
		\sum_{k=1}^\infty \mathbb{P}\left(x_{k-1}=x^{(-1)}\right)\mathbb{P}\left(x_{k}=x^{0}\mid x_{k-1}=x^{(-1)}\right)\mu^\textnormal{D}_{x^0}\left(j_0\right)>0,
		$$
		where $\mathbb{P}\left(x_{k}=x^{0}\mid x_{k-1}=x^{(-1)}\right)=\mu^\textnormal{D}_{x^{(-1)}}\left(\{j\in J ~\mid~\phi_{j}\left(x^{(-1)}\right)=x^0\}\right)>0$, where we have used the full-diversity assumption.
		\item Also, if $x_{k-1}=x^{0}$ again, completes a disagreement cycle, $x^0\to x^{(-1)}\to x^0$, returning $x_0$. Similarly, this happens with probability
		$$
		\sum_{k=1}^\infty \mathbb{P}\left(x_{k-1}=x^{0}\right)\mu^\textnormal{D}_{x^0}\left(j_0\right)\mathbb{P}\left(x_{k+1}=x^{(-1)}\mid x_{k}=x^{(-1)}\right)>0\,.
		$$
		As $V(x^{(-1)})<V(x^0)<1$, thus,
		\begin{equation*}
			\mathbb{E}_{\mu^\textnormal{D},\nu}(V\circ\Phi^\textnormal{D})\coloneqq \sum_{x\in X}\nu(x)\mathbb{E}_{\mu^\textnormal{D}}\left(V\circ\Phi^\textnormal{D}(x))\right)<1\,.
		\end{equation*}

	\end{itemize}
\end{proof}
\section{The Diversity Prediction Theorem and the Crowds Beat Averages Law}\label{sec:5}
\subsection{The results.}
Hong and Page also present another theorem that would be useful to analyze. First, some definitions. Given a set of individuals labeled as $i=1,\ldots, N$, we associate to each of them a signal or prediction of some magnitude, which has $\theta$ as true value. The squared error of an individual's signal equals the square of the difference between the signal and the true outcome:
\[
\text{SE}(s_i) = (s_i - \theta)^2\,.
\]
The average squared error is given by
\[
\text{MSE}(\underline{s}) = \frac{1}{n}\sum_{i=1}^{n}(s_i - \theta)^2\,,
\]
with $\underline{s} \coloneqq (s_1, s_2, \ldots, s_n)$. The collective prediction is
\[
c =c(\underline{s}) = \frac{1}{n}\sum_{i=1}^{n} s_i\,.
\]
Predictive diversity of the collective is defined as:
\[
\hat{\sigma}(\underline{s}) = \frac{1}{n}\sum_{i=1}^{n}(s_i - c)^2\,.
\]
This is simply a (biased) estimation of the variance. Two trivial theorems can be deduced. The first, a particular version of the Pythagoras Theorem:
\begin{theorem}[Diversity Prediction Theorem]\label{th:div pred}
	The squared error of the collective prediction equals the average squared error minus the predictive diversity:
	\[
	\text{SE}\left(c\left(\underline{s}\right)\right) = \text{MSE}(\underline{s}) - \hat{\sigma}(\underline{s})\,.
	\]
\end{theorem}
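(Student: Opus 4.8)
The plan is to prove the identity by an orthogonal (bias–variance style) decomposition of the squared deviations about $\theta$. The single idea is to insert the collective prediction $c$ as an intermediate reference point: for each $i$, write $s_i - \theta = (s_i - c) + (c - \theta)$, so that the individual error is split into a ``within'' part measuring how far the signal lies from the consensus and a ``between'' part measuring how far the consensus lies from the truth.

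First I would expand the mean squared error using this splitting:
\begin{equation*}
	\text{MSE}(\underline{s}) = \frac{1}{n}\sum_{i=1}^{n}\bigl((s_i - c) + (c - \theta)\bigr)^2
	= \frac{1}{n}\sum_{i=1}^{n}(s_i - c)^2 + \frac{2(c-\theta)}{n}\sum_{i=1}^{n}(s_i - c) + (c-\theta)^2\,,
\end{equation*}
where the term $(c-\theta)^2$ has been pulled out of the average because it does not depend on the index $i$. The first summand is exactly the predictive diversity $\hat{\sigma}(\underline{s})$, and the last summand is exactly $\text{SE}(c(\underline{s})) = (c-\theta)^2$.

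The only step requiring an observation is the vanishing of the cross term. Here I would use the defining property of the arithmetic mean: since $c = \frac{1}{n}\sum_{i=1}^{n} s_i$, we have $\sum_{i=1}^{n}(s_i - c) = \sum_{i=1}^{n} s_i - nc = 0$. This is precisely the geometric content that makes the decomposition ``Pythagorean'': the deviation vector $(s_i - c)_i$ is orthogonal to the constant vector, so the middle inner-product term drops out. Substituting this gives $\text{MSE}(\underline{s}) = \hat{\sigma}(\underline{s}) + \text{SE}(c(\underline{s}))$, and rearranging yields the claimed identity $\text{SE}(c(\underline{s})) = \text{MSE}(\underline{s}) - \hat{\sigma}(\underline{s})$.

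There is essentially no hard part: the result is a formal algebraic identity, and the ``main obstacle'' is merely recognizing that the cross term vanishes by the mean property rather than requiring any hypothesis about the signals, the true value $\theta$, or any notion of ability or diversity. This triviality is in fact the point of the surrounding discussion, since it shows the identity holds for \emph{any} tuple $\underline{s}$ and any $\theta$, so no term can be varied while holding the others fixed.
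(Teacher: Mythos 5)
Your proof is correct and is essentially the paper's own argument: the paper invokes the generalized Pythagorean theorem in $\mathbb{R}^n$ after checking $\langle\underline{s}-\underline{c},\underline{\theta}-\underline{c}\rangle_{l^2}=0$, which is exactly your vanishing cross term $\sum_{i}(s_i-c)=0$, merely phrased geometrically rather than via direct expansion of the square. You even identify this orthogonality as the Pythagorean content, so the two write-ups differ only in presentation, not in substance.
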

\begin{proof}
	This is quite standard, but let us give a proof using the (generalized) Pythagoras Theorem. In $\mathbb{R}^n$ we can define the standard Euclidean or $l^2$-norm. If $\underline{c}=(c,\ldots, c)$ and analogously for $\underline{\theta}$, then $\langle\underline{s}-\underline{c}~,\underline{\theta}-\underline{c}\rangle_{l^2}=0$ so the Pythagoras Theorem gives
	\begin{equation}\label{eq: pyth}
		\|\underline{s}-\underline{\theta}\|_{l^2}^2=\|\underline{\theta}-\underline{c}\|_{l^2}^2+\|\underline{s}-\underline{c}\|_{l^2}^2\,.
	\end{equation}
\end{proof}
\begin{corollary}[Crowd Beats Averages Law]
	The squared error of the collective's prediction is less than or equal to the averaged squared error of the individuals that make up the crowd.
	\[
	\text{SE}\left(c\left(\underline{s}\right)\right) \leq \text{MSE}(\underline{s})\,.
	\]
\end{corollary}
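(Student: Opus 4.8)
The plan is to derive this corollary as an immediate consequence of the Diversity Prediction Theorem (Theorem \ref{th:div pred}), which has already been established. The entire content of the inequality reduces to a single observation: that the predictive diversity term is non-negative.

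First I would recall the identity furnished by Theorem \ref{th:div pred}, namely
\[
\text{SE}\left(c\left(\underline{s}\right)\right) = \text{MSE}(\underline{s}) - \hat{\sigma}(\underline{s})\,.
\]
Next I would observe that the predictive diversity
\[
\hat{\sigma}(\underline{s}) = \frac{1}{n}\sum_{i=1}^{n}(s_i - c)^2 \ge 0\,,
\]
since it is an average of squares of real numbers, hence a sum of non-negative terms. Substituting this into the identity and subtracting a non-negative quantity yields $\text{SE}\left(c\left(\underline{s}\right)\right) \le \text{MSE}(\underline{s})$, which is exactly the desired claim. Equality holds precisely when $\hat{\sigma}(\underline{s})=0$, i.e.\ when all signals coincide ($s_i = c$ for every $i$).

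There is essentially no obstacle here, and that is precisely the point this appendix is making: once the Pythagorean identity of Theorem \ref{th:div pred} is in hand, the ``law'' is nothing more than the elementary fact that a variance is non-negative. I would emphasize, in the same spirit as the main text, that the asymmetry flagged later (see Section \ref{sec:asymmetric}) is already visible here — the inequality is driven entirely by the sign of $\hat{\sigma}$, and no genuine interplay between ``diversity'' and ``ability'' is at work, so the result carries no content beyond that of the orthogonal decomposition in \eqref{eq: pyth}.
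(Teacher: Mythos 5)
Your proof is correct and matches the paper's (implicit) argument: the corollary is stated without a separate proof precisely because it follows immediately from Theorem \ref{th:div pred} together with the non-negativity of $\hat{\sigma}(\underline{s})$ as an average of squares, which is exactly what you write. Nothing is missing.
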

\subsection{The asymmetric role of ``ability'' and ``diversity''.}\label{sec:asymmetric}
Before we proceed, let's note two simple mathematical observations:
\begin{error}\label{obs:trivial obs}
	\textbf{$MSE$ and $\hat{\sigma}$ cannot be treated as independent} as both depend on $\underline{s}$. That is,  altering one will generally change the other (it is not fixed), with the effect on the prediction error being, in principle, undetermined.
\end{error}
\begin{error}\label{obs:huge error}
	Therefore, it would be a significant mathematical error to consider that for the prediction error, $SE$, to be small is enough to make ``diversity'', $\hat{\sigma}$ large.  
\end{error}
These observations are mathematically trivial. Also, they can be graphically demonstrated when we consider the case of $n=2$, which brings us back to the standard Pythagoras theorem, see Figure \ref{fig:pythag_comparison}.
\begin{figure}
	\centering
	\begin{subfigure}{0.49\linewidth}
		\centering
		\includegraphics[width=\linewidth]{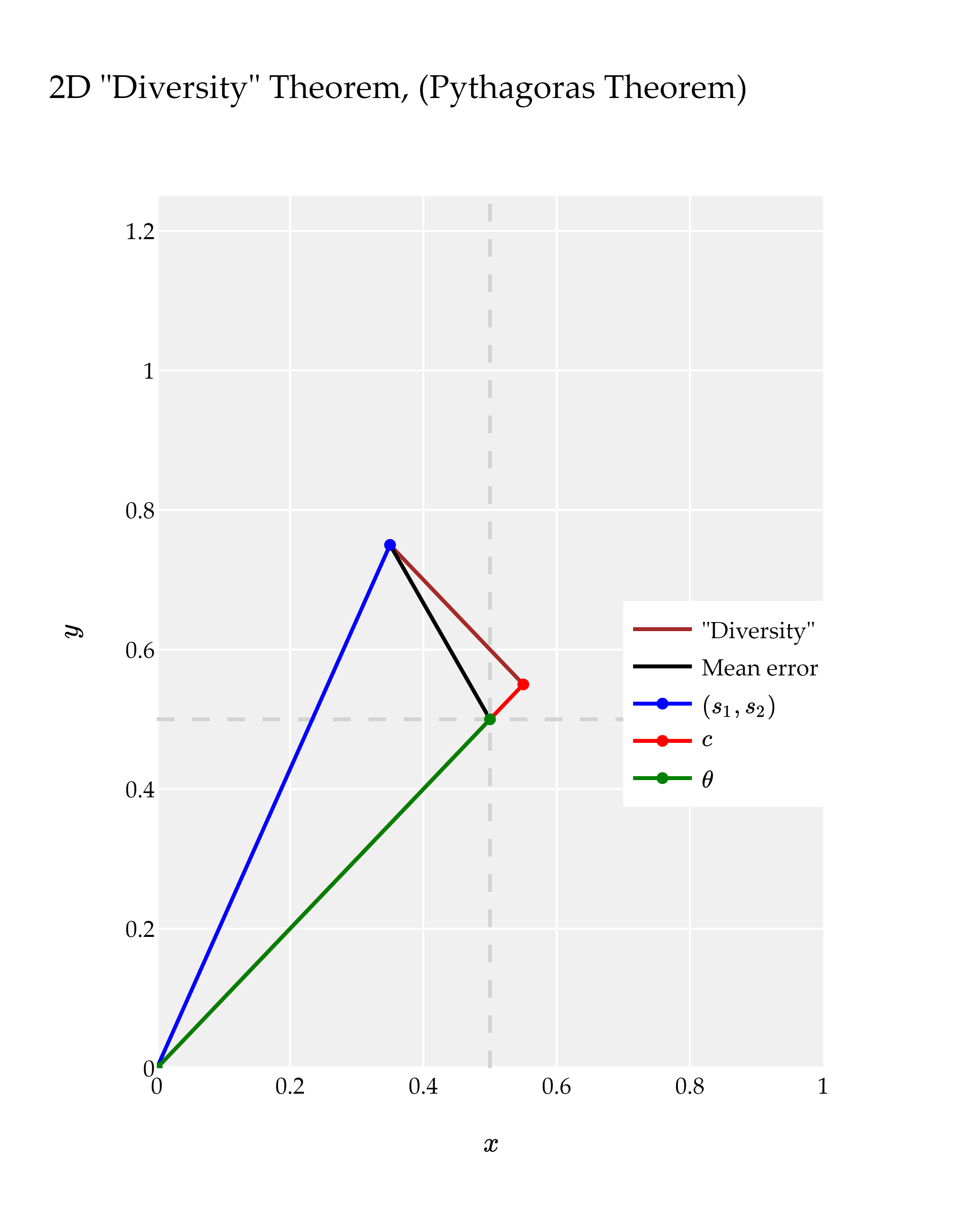}
		\caption{Initial result, prediction error is relatively ``small'': $\left(c-\theta\right)^2\approx0.0025$.}
		\label{fig:pythag1}
	\end{subfigure}
	\hfill
	\begin{subfigure}{0.49\linewidth}
		\centering
		\includegraphics[width=\linewidth]{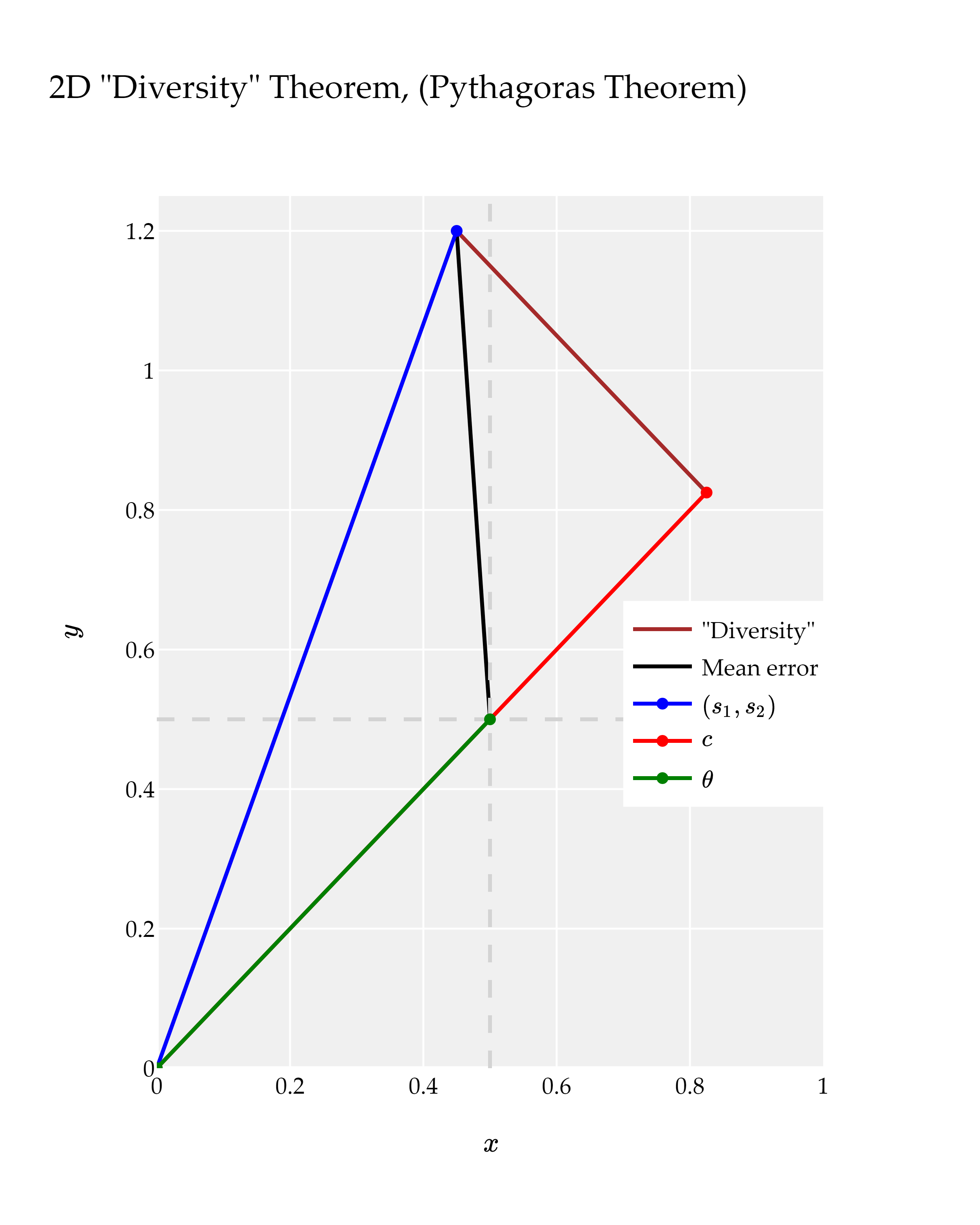}
		\caption{Result after increasing diversity, prediction error is relatively ``large'': $\left(c-\theta\right)^2\approx0.1056$.}
		\label{fig:pythag2}
	\end{subfigure}
	\caption{Increasing diversity does not always improve predictions and can sometimes significantly worsen them. The prediction error is represented by the red line, with the red dot indicating the prediction. The black line corresponds to $\text{MSE}$, and the brown line to $\hat{\sigma}$. The true value, $\theta$, is $\frac12$ (represented by the green dot). As diversity increases more than threefold (from 0.04 to 0.14), the squared error becomes more than forty times larger.}
	\label{fig:pythag_comparison}
\end{figure}
Knowing either $\text{MSE}(\underline{s})$ or $\hat{\sigma}$ alone is not sufficient to determine the value of the prediction error. In fact, according to the Crowd Beats Averages Law, we can see:

\begin{equation}
	\text{SE}\left(\text{MSE}, \hat{\sigma}\right) \in [0,\text{SE}^\textnormal{max}(\underline{s})]\,.
\end{equation}

This bound is sharp, with $\text{SE}^\textnormal{max}\coloneqq \text{MSE}$. Since $\text{SE}$ is not solely determined by either ``ability" or ``diversity", these variables can be observed in the context of the maximum prediction error, i.e., $\text{SE}^\textnormal{max}$. More precisely:

\begin{proposition}\label{prop:SEmax} Let $\text{SE}^\textnormal{max}$ represent the maximum prediction error. Then,
	\begin{itemize}
		\item If $\Delta\text{MSE}<0$, then $\Delta\text{SE}^\textnormal{max}<0$. In other words, if ``ability" increases, the maximum prediction error decreases. Particularly, if the increase in ability is large enough, the prediction error will decrease.
		\item If $\Delta\hat{\sigma}>0$, then $\Delta\text{SE}^\textnormal{max}\ge0$. This implies that if ``diversity" increases, the maximum prediction error also increases. In particular, an increase in diversity alone does not guarantee a reduction in the prediction error. Furthermore, if the increase in diversity is substantial enough, the maximum prediction error will also increase.
	\end{itemize}
\end{proposition}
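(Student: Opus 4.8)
The plan is to reduce everything to a single identity that expresses $\text{SE}^\textnormal{max}$ in terms of two genuinely free coordinates, after which both monotonicity claims become one-line consequences. Starting from the Diversity Prediction Theorem (Theorem \ref{th:div pred}) together with $\text{SE}(c)=(c-\theta)^2$ and the declared convention $\text{SE}^\textnormal{max}\coloneqq\text{MSE}$, I would first record the decomposition
\[
\text{SE}^\textnormal{max}=\text{MSE}(\underline{s})=(c-\theta)^2+\hat{\sigma}(\underline{s}).
\]
This rewrites the maximal error as the sum of the squared bias $(c-\theta)^2$ of the collective prediction and the predictive diversity $\hat{\sigma}$, and it is this linear form that drives the entire argument.

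For the first bullet I would argue that, since $\text{SE}^\textnormal{max}$ equals $\text{MSE}$ identically, the implication $\Delta\text{MSE}<0\Rightarrow\Delta\text{SE}^\textnormal{max}<0$ is immediate. For the sharper ``if the increase in ability is large enough'' clause I would invoke the Crowd Beats Averages bound $0\le\text{SE}\le\text{MSE}$: driving $\text{MSE}$ below any prescribed $\varepsilon>0$ forces the actual error to satisfy $\text{SE}<\varepsilon$, so a sufficiently large reduction in $\text{MSE}$ (an increase in ability) guarantees a genuine reduction in the prediction error itself.

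For the second bullet the subtlety flagged in Error \ref{obs:trivial obs} becomes the crux: $\text{MSE}$ and $\hat{\sigma}$ cannot be varied independently, so ``increasing diversity'' acquires a determinate meaning only once a second quantity is held fixed. I would make the comparison precise by perturbing the signals so that the collective prediction $c$ (hence the bias $(c-\theta)^2$) is kept fixed while $\hat{\sigma}$ increases — concretely, for $n\ge 2$, by spreading two signals symmetrically about their mean — whereupon the displayed identity yields $\Delta\text{SE}^\textnormal{max}=\Delta\hat{\sigma}>0$. To upgrade this to the unconditional assertion I would note that $(c-\theta)^2\ge 0$ always, so $\text{SE}^\textnormal{max}\ge\hat{\sigma}$ with no freedom to evade it: the maximal error can never drop below the diversity level, hence raising diversity raises this floor and cannot, on its own, shrink the worst-case error. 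This is precisely the asymmetry with ability that the proposition is meant to expose.

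The main obstacle here is conceptual rather than computational: one must pin down what a ceteris paribus change in $\hat{\sigma}$ ought to mean, given that $\hat{\sigma}$ and $\text{MSE}$ are functionally entangled through $\underline{s}$. The identity $\text{SE}^\textnormal{max}=(c-\theta)^2+\hat{\sigma}$ resolves this by exhibiting the two legitimately independent coordinates $(c-\theta)^2$ and $\hat{\sigma}$; once the comparison is phrased in these coordinates, both signs follow at once, and the asymmetry — ability controls $\text{SE}^\textnormal{max}$ outright while diversity only bounds it from below — is transparent.
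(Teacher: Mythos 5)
Your argument is correct and is essentially the paper's own proof: the first bullet rests on the identity $\text{SE}^\textnormal{max}=\text{MSE}$, and the second on the twin inequality $\hat{\sigma}\le\text{SE}^\textnormal{max}$, which you rederive from the Pythagorean decomposition $\text{MSE}=(c-\theta)^2+\hat{\sigma}$ exactly as the paper does via the Crowd Beats Averages Law. The additional ceteris paribus perturbation keeping $c$ fixed is a helpful illustration but not needed beyond what the paper's two-line proof already contains.
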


\begin{proof}
	This is a trivial consequence of $\text{MSE}=\text{SE}^\textnormal{max}$ and the twin inequality of the Crowd Beats Averages Law: $\hat{\sigma}\le \text{SE}^\textnormal{max}$.
\end{proof}

Using the Crowd Beats Averages Law (and other trivial results), we arrive at a seemingly contradictory result: increasing ``ability" eventually reduces the prediction error, but increasing diversity ultimately increases the maximum prediction error. Consequently, the Diversity Prediction Theorem and the Crowd Beats Averages Law provide limited insight into how diversity impacts the prediction error in a general setting without controlling for ability.

\subsection{A basic mathematical error in advocating for diversity}\label{sec:Pages error}

Scott Page has argued that large diversity implies small prediction error. However, this conclusion, while favorable to the hypothesis that diversity reduces prediction error, constitutes a significant mathematical mistake. Indeed\footnote{Note that if one states that a large prediction error implies small diversity, by logical necessity, it is also stating that large diversity implies small prediction error (as a large prediction error would imply large diversity, a contradiction). }, in a \href{https://www.youtube.com/watch?v=EXn4vOuU3BE&list=WL&ab_channel=OsirisSalazar}{lecture (University of Michigan)}, Page states:

\begin{emphasisquote}
	And you might also ask, where does the madness of crowds originate? How could it be that a crowd could get something completely wrong? Well, that's not difficult to understand either, because crowd error equals average error minus diversity. If I want this to be large, if I want large collective error, then I need large average error, meaning that I need people to be getting things wrong, on average. \textbf{Additionally, I need diversity to be small}. So, \textbf{the madness of crowds comes from like-minded individuals} who are all incorrect, and once again, the equation provides us with this result.
\end{emphasisquote}

This mathematical misunderstanding involves a basic arithmetic error that we mentioned in Error \ref{obs:huge error}. From the ``Diversity Prediction Theorem" (with $\underline{s}$ term omitted for simplicity),
$$
\text{SE}= \text{MSE} - \hat{\sigma}\,,
$$
we \textbf{cannot} deduce that a large $\text{SE}$ implies a small $\hat{\sigma}$. Rather, it implies that $\text{MSE}$ must be much larger than $\hat{\sigma}$, where $\hat{\sigma}$ could be as large as desired. See, for instance, Figure \ref{fig:pythag2} for an illustration, where the prediction error is large, but diversity is larger (so it cannot be ``small").
\begin{remark}
	It is worth mentioning that Scott Page has made this point clear elsewhere. For instance, in his book ``The Diversity Bonus'', page 75, he says:
	\begin{quote}
		Our first intuition might be to select the three most accurate students. That group would have the lowest average error. Recall though that the group’s error depends in equal measure on their diversity. By choosing the group with the smallest average error, we ignore diver- sity. We ignore half of the equation. If the three most accurate people all think the same way, then we have no diversity. \textbf{Selecting the group of three people with the largest diversity makes even less sense}. To have high diversity, the group must have an even higher average error. The best approach is to search among all 570 possible groups for the one whose diversity is closest to its accuracy. 
	\end{quote}
\end{remark}

\section{Technical remarks}
To ease the exposition, several technical remarks are collected here.
\subsection{Remarks of Section \ref{sec:2}}
\begin{remark}\label{rem:Rawls}
	Assumption \ref{as:unique prob} appears to restrict the scope of problems to those of a technical nature, where the diversity of values might not be significant. However, the domains to which the theorem has been proposed for application, such as democracy, embody a profound component of political philosophy in which a diversity of values is unavoidable. Rawls, in particular, highlights "reasonable pluralism" as a fundamental element of his concept of a well-ordered society. Gaus further critiques this idea by arguing that reasonable pluralism not only encompasses varied conceptions of the good life but also extends to divergent views on justice itself. Gaus asserts \cite{Gaus10}:
	
	\begin{quote}
		Kant, Hare, and Rawls all believed that if we set up the constraints on moral legislation properly, the optimal eligible set will be a singleton. Only one rule can meet the constraints on right, properly understood. However, in our model, reasonable pluralism is taken seriously: Members of the Public are applying their interpretations of the test based on different evaluative standards and beliefs. There is no reason to suppose that, except at a very abstract level (§17), the results of the application of the same tests, by different Members of the Public, based on different evaluative standards, will yield the same answer. As we saw, having diverse evaluative standards, they will rank the proposals differently based on these standards. There is no reason to suppose that, out of all this diversity, Members of the Public will converge on the same proposal.
	\end{quote}
	
	In the context mentioned above, this could be interpreted as, according to Rawls, the $V_\phi$ functions may differ, yet there exists a unique optimum, $x^*$, such that $V_\phi(x^*)=1$ for all "reasonable" $\phi$, whereas for Gaus, not only do the $V_\phi$ functions do not equal a common $V$, but there also does not exist the singleton $\{x^*\}$. Consequently, in both cases,  the assumption is not met.
	
	This recognition of pluralism as an intrinsic aspect of social structure raises questions about the theorem's applicability without additional qualifications. Nonetheless, this analysis primarily examines the mathematical premises and their wider implications, temporarily setting aside discussions on value diversity to concentrate on the mathematical critique.
\end{remark}

\begin{remark}\label{rem:Phi0}
	In Theorem \ref{th:deterministic HP} two points warrant attention. Firstly, we could have consider a subset $\Phi_0$ such that $\forall ~x\in X$, agents reach $x^*$ consistent with Assumption \ref{as:group del}, i.e., $\mathbb{E}\left(V; \tilde{\Phi}_0\right)=1$, so  \eqref{eq:tilde Phi def} is trivially satisfied. For simplicity, one might prefer considering $\Phi$ over $\Phi_0$, as both groups consistently achieve the correct solution. Unneeded elements can be omitted without loss of generality. Given the uncertainty in the selection process, the only definitive method to include all indispensable agents is to encompass all. Whether we take into account $\Phi_0\subset \Phi_R$ or $\Phi\subset \Phi_R$ has no bearing on our arguments.
	
	Secondly, in \cite{HP04} and above, $N_1$ is an integer such that
	\begin{equation*}
		\mathbb{E}\left(V; \Phi_R\right)>\mathbb{E}\left(V; \{\phi^*\}\right)\,,
	\end{equation*}
	which translates to $\tilde{\Phi}_0$ being a subset of $\Phi$ capable of outperforming the best-performing agent. As we said, this subset's existence is a straightforward outcome of Corollary \ref{cor:diverse group}. Yet, selecting $N_1$ such that $\mathbb{E}\left(V; {\Phi}_0\right)=1$ provides a more encompassing result (as seen in Proposition \ref{prop:other results}), which inherently contains the result in \cite{HP04}; if an $N_1$ exists where the correct solution is invariably attained, then due to its monotonic nature, there must also be an $N_1$ where the alternative imperfect  group is outperformed. Again, the distinction between considering $\tilde{\Phi}_0\subset \Phi_R$ or $\Phi_0\subset \Phi_R$ is mostly irrelevant. For ease and based on the aforementioned discussion, we will also use $\Phi$ or $\Phi_0$ in subsequent sections. Further details can be found in Appendix \ref{ap:simpler proof}.
\end{remark}

\begin{remark}\label{rem:CJTvsHP} One could say that the Condorcet Jury Theorem (CJT) is as trivial as Theorem \ref{th:HP PNAS} and subject to similar critiques as the one presented above, but this would be an unfair comparison. The Hong-Page theorem and the CJT both try to contribute to our understanding of collective decision-making, yet they do so in markedly different ways. The Hong-Page theorem, despite its initial presentation of mathematical sophistication, as we have seen, essentially boils down to a self-evident proposition once its layers are peeled away. Its hypotheses—assuming agents never degrade a solution and that there exists always an agent who can improve upon it, coupled with the selection of agents until they surpass an already acknowledged imperfect best performer—do all the heavy lifting. Once these assumptions are made explicit, the theorem's conclusion becomes almost self-evident. Mathematics is not needed to understand that if it is assumed that the best agent is imperfect and the ``random group'' can be made perfect, the random group will outperform the best agent. Furthermore, when the theorem is adjusted to include more realistic characteristics, the original conclusion shifts dramatically, see Section \ref{sec:ability trumps div}, indicating the frailty of its insights under slight modifications. 
	
	In stark contrast, the CJT, even in its most basic form, delivers insights into the epistemic strength of democracy through collective decision-making. The theorem demonstrates how, under certain conditions, a large group of voters, each with a probability of being correct greater than chance, can collectively make decisions that converge to the correct choice with increasing certainty as the group size grows. While the proof of CJT utilizes probabilistic tools that may require university-level mathematics to fully grasp, it remains accessible and mathematics is essential to fully capture this insight. Moreover, the CJT has been the foundation for extensive refinement and expansion in political theory and economics literature, adapting to more complex scenarios such as correlated votes and non-homogeneous preferences \cite{Rom22} and references therein, using mathematics to understand the conditions under which the insight works, thereby enhancing its applicability and relevance\footnote{However, as we have criticized elsewhere \cite{Rom22}, the theorem is also often misused to reach conclusions that it cannot support.}.
	
	The comparison between the Hong-Page theorem and the CJT is thus not just in their mathematical complexity but in the depth of insight each provides. While the former wraps a simple fact in unnecessary mathematical additions and dramatically unstable under changes of the tailored hypotheses, the latter offers a compelling argument for the power of collective decision-making. Here, mathematics, in contrast to natural language, is essential for fully understanding the insight and the conditions for applicability in more complex cases. This contrast underscores the importance of critically examining the assumptions and implications of mathematical theorems in social sciences, distinguishing between those that offer genuine insights and those that can be reduced to simple facts when the unnecessary mathematics are removed.
\end{remark}
\subsection{Remarks of Section \ref{sec:3}}
\begin{remark}\label{rem:inj reply} (See Section \ref{sec:V inj})
	In real-life applications, the value of $V$ can be highly uncertain. Therefore, it is sensible to assume that, in the case of two states, $x,x'\in X$, where it is estimated that $V(x)\approx V(x')$, we set $V(x)=V(x')$ for practical purposes. This situation should not be disregarded as uncommon. Nevertheless, as argued in \cite{Kue17}, cited by Landemore:
	\begin{quote}
		You don’t fail to make it to the cashier in a grocery store when you are completely indifferent between buying one more apple or one more orange, nor do deliberators in a meeting fail to decide on some course of action if two options have precisely equivalent value. Adding a simple tie-breaking rule to the theorem is entirely sufficient to deal with the mathematical hiccup and move forward with the fundamental
		scientific question at hand.
	\end{quote}
	This argument does not address the technical point raised by Thompson. The problem is not that we are indifferent between the solutions $b$ or $c$, but rather that no one knows the solution if we do start at $b$ or $c$ (no one moves from these states to $d$). A tie-breaking rule only selects either $b$ or $c$, but never $d$, representing a failure of the theorem. The fact that the value function is ``indifferent'' implies that the hypotheses (in particular, the ``diversity'' assumption) are not sufficient to guarantee that $d$ is reached.
	
	The thesis of the theorem still holds if we replace Assumption \ref{as:diversity} with $\forall x\in X\backslash\{x^*\}~\exists~ \phi\in\Phi ~/~ V\left(\phi(x)\right)> V(x)\,$. However, this adjustment only serves to render the theorem more trivial and misapplies the term diversity. This condition simply implies that for every state, there exists an agent that can strictly improve that state. It is unsurprising that in a finite number of steps, these agents reach the maximum, which, by hypothesis, the best problem solver cannot always attain. Consequently, this adjustment does fix the theorem, but at the cost of making it more trivial and highlighting that what the theorem requires is not ``diversity", but the existence of a more ``able" problem solver who can improve upon areas where others fall short.
\end{remark}
\begin{remark}\label{rem:X inf}(See Section \ref{sec:clone performance})
	Regarding the issue of clones, the following is stated in \cite{HP04}:
	\begin{quote}
		[...] we present a simpler version of our result where X is
		assumed to be finite. This finite version makes the insight more straightforward, although it comes at the cost of trivializing some intricate assumptions and arguments. For example, the group of the best-performing agents is proven below to be comprised of identical agents. This is an artifact of the finite version. In the general version under reasonable conditions, the group of the best-performing agents can be shown to be similar, not necessarily the same.
	\end{quote}
	However, this explanation is far from accurate. Clones also appear in the less realistic case where $X$ is not finite. This occurs because we have to take copies from $\Phi$ and, if $\phi$ has already appeared, it can appear again\footnote{With positive probability in the reasonable case of a finite $\Phi$.}. Moreover, the finite version of the model is neither sufficient nor necessary for proving that the group of the best-performing agents is comprised of identical agents.
	
	In a scenario where $X$ is finite, the best agents could be several different ones once we consider only one copy of each agent. This could be easily demonstrated by following my previous example from Section \ref{sec:unique best} or \cite[Assumption 5']{HP98}. In the version where $X$ is not finite, according to Assumption 5 of their appendix, $B(\phi^*,\delta)\cap\Phi=\{\phi\in\Phi~|~d(\phi,\phi^*)<\delta\}$ could contain only one agent, namely $\phi^*$. It should also be noted that a finite $X$ represents a more realistic setup. Typically, rendering things continuous simplifies the analysis, as it allows us to use standard calculus, for example, but this is not the case here. It is less realistic to assume that agents have answers to an infinite set of elements than to a finite set. 
	
	Furthermore, to reach the conclusion in this case, more predetermined hypotheses are needed, which cannot be justified in terms of being reasonable or intuitive. Basically, Assumption 5 in \cite[Appendix]{HP98} imposes that the best agents are clustered near the best performing agent. They are clustered, in the sense of the distance metric defined above, in a way that guarantees that if they are close, the expected performance will be close too, \cite[Lemma 2]{HP98}. Then, in the proof, they choose the best agents as close to the best performing agent as needed so that they do not always reach \(x^*\), \cite[Lemma 4]{HP98}.
\end{remark}

\newpage
\bibliographystyle{alpha}
\bibliography{HPLBib.bibtex}
	
 \end{document}